\documentclass[11pt]{article}

\usepackage{enumerate}

\usepackage{blurb_style}
\input{EE.sty}

\usepackage{setspace}

\usepackage{graphicx}
\usepackage{epstopdf}
\DeclareGraphicsRule{.tif}{png}{.png}{`convert #1 `basename #1 .tif`.png}
\DeclareGraphicsRule{.tiff}{png}{.png}{`convert #1 `basename #1 .tiff`.png}

\usepackage{natbib}

\usepackage{bm}
\usepackage{epigraph} 
\usepackage[table]{xcolor}
\usepackage{multirow}

\usepackage{listings}
\lstset{
  language=R,
  basicstyle=\scriptsize\ttfamily,
  numbers=left,
  stepnumber=1,
  numbersep=5pt,
  backgroundcolor=\color{white},
  showspaces=false,
  showstringspaces=false,
  showtabs=false,
  frame=single,
  tabsize=2,
  captionpos=b,
  breaklines=true,
  breakatwhitespace=false,
  title=\lstname,
  escapeinside={},
  keywordstyle={},
  morekeywords={},
  commentstyle=\color{blue}
}

\DeclareMathSymbol{\bS}{\mathbin}{AMSb}{"53}
\newcommand{\sch}{\bS}
\newcommand\schimp{\tilde{\sch}}
\newcommand\Yobs{Y^{\text{obs}}}

\newcommand\yobsV{\textbf{y}^{\text{obs}}}
\newcommand\Yimp{\tilde{Y}}

\newcommand\Wobs{W^{\text{obs}}}
\newcommand\wobs{w^{\text{obs}}}

\newcommand\wobsV{\textbf{w}^{\text{obs}}}
\newcommand\pimp{\tilde{p}}
\newcommand\tobs{{t}^{\text{obs}}}

\newcommand\taumax{\tau^{\vee}}
\newcommand\vtaumax{\vtau^{\vee}}
\newcommand\taumin{\tau^{\wedge}}
\newcommand\vtaumin{\vtau^{\wedge}}
\newcommand\sharpH{\rH_{\vtau^{0}}}
\newcommand\nonsupH{\rH_{\vtaumax}}
\newcommand\noninfH{\rH_{\vtaumin}}
\newcommand\boundedH{\nonsupH}

\usepackage{color}

\newcommand{\poscite}[1]{\citeauthor{#1}'s (\citeyear{#1})} 

\usepackage{hyperref}



\hyphenation{methodolo-gists}

\begin{document}

\author{Devin Caughey\\MIT \and Allan Dafoe\\Yale \and Luke Miratrix\\Harvard}
\title{%
  Beyond the Sharp Null: \\
  Randomization Inference, Bounded Null Hypotheses,\\
  and Confidence Intervals for Maximum Effects%
\thanks{Authors can be reached at devin.caughey@gmail.com, allandafoe@gmail.com, and lmiratrix@g.harvard.edu. For helpful input we thank Peter Aronow, Jake Bowers, Joanna Dafoe, Danny Hidalgo, Greg Huber, Kosuke Imai, Luke Keele, Kelly Rader, Brandon Stewart, and seminar participants at PolMeth 2016. We also thank the late Natasha Chichilnisky-Heal for posing the question that stimulated some of our thinking on this paper. 
}
}

\singlespacing

\maketitle
\begin{abstract}
\noindent  Fisherian randomization inference is often dismissed as testing an uninteresting and implausible hypothesis: the sharp null of no effects whatsoever. We show that this view is overly narrow. Many randomization tests are also valid under a more general ``bounded'' null hypothesis under which all effects are weakly negative (or positive), thus accommodating heterogenous effects.  By inverting such tests we can form one-sided confidence intervals for the maximum (or minimum) effect. These properties hold for all effect-increasing test statistics, which include both common statistics such as the mean difference and uncommon ones such as Stephenson rank statistics. The latter's sensitivity to extreme effects permits detection of positive effects even when the average effect is negative. We argue that bounded nulls are often of substantive or theoretical interest, and illustrate with two applications: testing monotonicity in an IV analysis and inferring effect sizes in a small randomized experiment.
\end{abstract}
\bigskip

\doublespacing

\setlength{\epigraphwidth}{0.6\textwidth}


%
%


\section{Motivation}
\label{sec:motivation}


Randomization inference (RI), also known as permutation inference, is a general statistical framework for learning about treatment effects. 
RI originated with R. A. \citet{Fisher35a}, who demonstrated that if treatment is randomly assigned to units, the hypothesis that no unit was affected by treatment---the so-called ``sharp null of no effects''---can be tested exactly,  with no further assumptions, by comparing an observed test statistic with its distribution across alternative realizations of treatment assignment.\footnote{
  More precisely, a randomization test of the sharp null of no effects is a \textit{valid} test, meaning that its whose true false-rejection rate is no greater than (as opposed to exactly equal to) its significance level $\alpha$.
}
 Moreover, by testing a sequence of hypotheses, randomization tests can also be used to create exact nonparametric confidence intervals (CIs) for treatment effects \citep{Lehmann63a}. RI thus provides a unified framework of 
 statistical inference that requires neither  parametric assumptions about the data-generating distribution nor asymptotic approximations that may be unreliable in small samples  (\citet{Rosenbaum02a}, but for applications in political science, see \citet{HoImai06a, Keele12a, Bowers13a, Glynn14a, CaugheyDafoeSeawright17NPC}).\footnote{
  RI's finite-sample validity provides an important contrast with nonparametric methods that rely on asymptotic approximations, such as  average treatment effect (ATE) estimation. For example, Welch's unequal-variances $t$ test, which is an asymptotically conservative nonparametric test of the ATE \citep{SamiiAronow12a}, is often considered highly robust, even in moderately sized samples of $n = 30$. But even the $t$ test can be quite inaccurate if the sample sizes differ between treatment and control and the response distributions are skewed \citep[e.g.,][372]{Hesterberg15a}. Appendix \ref{sec:performance-t-test} illustrates this with the example of a skewed beta distribution and sample sizes of $n_1=1000$ and $n_2=30$. Under these conditions, the $t$ test with $\alpha=0.01$ falsely rejects the null of mean equality over 10\% of the time. By contrast, Fisher's difference-of-means permutation test maintains exact coverage under these conditions.
}

Despite its virtues, RI has also been subject to trenchant critiques. One stems from the claim that RI can only be used to test \textit{sharp} null hypotheses---that is, hypotheses that stipulate the treatment effect of each unit. Since sharp hypotheses often have little prior plausibility, rejecting them arguably conveys little information of scientific or substantive value. A second line of criticism, related to the first, is that extending RI from hypothesis testing to interval estimation requires assumptions that are arguably as strong as those of its parametric and large-sample competitors. In particular, deriving interpretable CIs typically requires the assumption that all treatment effects are constant (that is, do not vary across units), which is often highly implausible, especially in the social sciences. For these reasons, social-science methodologists and applied researchers have typically viewed RI as a highly limited tool, useful primarily for assessing whether treatment had any effect at all.

These critiques have substantial merit, but we argue that they are based on an overly restrictive and pessimistic view of RI. We show that many randomization tests can in fact be interpreted as conservative tests of a ``weak'' null hypothesis---that is, one under which unit-level effects are heterogeneous and need not be precisely stipulated. Specifically, we prove that for a broad class of randomization tests, one-sided rejection of the sharp null that all treatment effects $\tau_i$ equal some constant $\tau^0$ also implies rejection of \emph{any} null hypothesis under which the $\tau_i$ are bounded on one side by $\tau^0$. This means that RI can be used test the \textit{non-superiority} null that all effects are less than or equal to $\tau^0$ ($\tau_i \le \tau^0 \; \forall i$) or, alternatively, the \textit{non-inferiority} null $\tau_i \ge \tau^0 \; \forall i$. We refer to these as \textit{bounded}  null hypotheses. 

We show that any randomization test of a given sharp null will be valid under the corresponding bounded null, as long as it employs an \textit{effect-increasing} test statistic. Loosely speaking, an effect-increasing test statistic is one that increases in value as the treated responses increase and the control responses decrease.\footnote{
  We borrow the term \textit{effect-increasing} test statistics from \citet[37--8]{Rosenbaum02a}, who discusses them in the context of power against particular alternatives under the sharp null. Our definitions of \textit{effect-increasing }differ slightly, in that Rosenbaum defines it in terms of observed outcomes and we do so in terms of potential outcomes.
} Although some test statitics, such as Student's $t$ statistic, are not effect-increasing, many commonly used statistics are, including the difference of means and the Wilcoxon rank sum. 

Compared to the sharp null of no effects, bounded null hypotheses are 
much more plausible a priori.
They are also normatively important. In order for a treatment or intervention to be Pareto improving, for example, it must make at least one person better off while hurting no one. Thus, rejecting the bounded null hypothesis that all effects are greater than or equal to zero implies rejection of the hypothesis that a treatment is Pareto improving. At a more technical level, bounded null hypotheses are also invoked as assumptions by a variety statistical methods. Instrumental variables estimation, for example, is typically conducted under a monotonicity assumption that the instrument has non-negative or non-positive effects on the treatment.
There are thus many situations where testing a bounded null hypothesis is of theoretical or practical interest.

In addition to hypothesis testing, our results also provide a basis for interval estimation. 
Rejecting $\tau_i = \tau^0\ \forall i$ with a one-sided randomization test implies rejection of the bounded hypothesis $\tau_i \le \tau^0\ \forall i$. 
Thus, by inverting a sequence of tests for different values of $\tau^0$, it is possible to form a one-sided confidence interval for the maximum treatment effect (analogous logic applies for the minimum effect). 
These intervals give a sense of the size of the effects for most impacted units.
They can also be thought of as confidence intervals stating where at least some of the individual causal effects lie, and the more homogeneous the effects are thought to be, the more effects these intervals would contain. 
In particular, when using test statistics sensitive to the central tendency, inferences on the upper bound will generally coincide closely with inferences for the ATE. As we demonstrate in our example applications, however, the inferences for the mean and the maximum can diverge substantially if one uses a test statistic sensitive to extreme treatment effects. Indeed, even if the ATE is significantly negative, it is nevertheless possible to conclude that treatment had at least one positive effect of a given magnitude.

In sum, we offer a novel reinterpretation of randomization inference. We show that RI can do much more than merely assess whether treatment had any effect at all. Not only are randomization tests valid under a much less restrictive null hypothesis than is commonly understood, but, unlike ATE estimators, they can be used to construct interval estimates for the maximum or minimum treatment effect, a quantity that is often of normative or theoretical interest. 
 All this is possible without asymptotic approximations or any additional assumptions beyond random assignment and SUTVA \citep{rubin1980randomization}. 

The remainder of the paper is organized as follow. After first reviewing existing perspectives on RI, we illustrate the intuition behind randomization tests' validity under the bounded null. We then prove this result formally for the class of effect-increasing test statistics, and show that this result implies that RI can be used to derive exact confidence intervals for the maximum or minimum treatment effect. We follow with a discussion of effect-increasing statistics, noting that more precise confidence intervals can be obtained by using statistics sensitive to extreme effects. After discussing the practical and theoretical relevance of bounded null hypotheses, we illustrate its value with two example applications. We then offer brief conclusions.

\section{Current Perspectives on Randomization Inference}
\label{sec:persp-rand-infer}

As noted above, critiques of RI generally focus on two main limitations. The first is that randomization tests are valid only under a specific sharp null hypothesis, such as that no unit has any treatment effect or that all units have the same effect. By contrast, large-sample nonparametric procedures such as the $t$ test are valid under a so-called ``weak'' null hypothesis that specifies the value of some function of the treatment effects, typically the ATE. The sharp null of no effects implies the weak null of mean equality, but the converse is not true. Indeed, without further assumptions, permutation tests are not valid tests of a weak null, even asymptotically \citep{Romano90a}. For example, if treatment groups have different sizes and variances, rejecting the sharp null of no effects with a difference-of-means permutation test does not imply that the weak null of no average effect can also be validly rejected. For these reasons, many scholars follow \citet[173]{Neyman35a} in regarding sharp nulls as ``uninteresting and academic.'' \citet{Gelman11a}, for example, argues that ``the so-called Fisher exact test almost never makes sense, as it's a test of an uninteresting hypothesis of exactly zero effects (or, worse, effects that are nonzero but are identical across all units).'' The current consensus among political methodologists is aptly summarized by \citet[330]{Keele15a}, who states that because it does not easily ``accommodate heterogeneous responses to treatment,'' the sharp null ``would seem to be a very restrictive null hypothesis.'' 

The crux of this critique, however, concerns not hypothesis testing per se, but rather inference regarding causal quantities of interest. From this perspective, the real value of Neyman-style causal inference is not that it permits a given weak null to be tested, but that it provides a basis for interval estimation of some summary of the unit-level effects, namely their average.\footnote{
  By \textit{interval estimation} we mean the calculation of a numerical range to summarize the probable value of an unknown quantity. By contrast, point estimation summarizes the probable value with a single ``best guess.'' 
} Confidence intervals for the ATE indicate the range of plausible values of this parameter that are consistent with the data. Although CIs can also be constructed within an RI framework, doing so typically requires the assumption that effects are constant across units. This constant-effects assumption is rightly viewed as quite strong, especially in social science, ``where treatment effect heterogeneity is a rule rather than an exception.''\footnote{
  This quotation is from Kosuke Imai's lecture notes on permutation tests (\citeyear[7]{Imai13a}).
} Without some version of this assumption, however, RI as currently understood provides no basis for drawing exact inferences about the magnitude of treatment effects.

Defenders of randomization inference have responded to these critiques in various ways. Some, while largely accepting the critiques of the sharp null, argue that RI is nevertheless useful for assessing whether treatment had any effect at all, as a preliminary step to determine whether further analysis is warranted \citep[e.g.,][]{Imbens15a}. 
An alternative proposal, advanced by \citet{Chung13a}, is to employ ``studentized'' test statistics that render permutation tests asymptotically valid under a weak null
. Some scholars defend the constant-effects assumption more forthrightly, regarding it as a convenient approximation that is preferrable to the shortcomings of parametric methods, such as their sensitity to assumptions about tail behavior \citep{rosenbaum2010design} or inability to account for complex treatment assignments \citep{HoImai06a}.

It is also possible to model heterogeneity explicitly. \citet{Rosenbaum02a} demonstrates, for example, that permutation tests can be used to assess multiplicative, Tobit, quantile, and attributable effects, each of which permits certain precise kinds of heterogeneous additive effects \citep[see also][]{HoImai06a, Bowers13a}. 
Indeed, in principle there is no barrier to using RI to assess any arbitrary null hypothesis, so long as the hypothesis is sharp in the sense that it fully specifies the unit-level treatment effects (see, e.g., \cite{Li:2016tw}). 
In general, however, testing arbitrary sharp null hypotheses does not provide informative inferences because the parameter space is typically too unwieldy (with $n$ units, the space of possible effects is $n$-dimensional---assuming no spillover!).


While the defenses described above are reasonable, all presume that randomization tests are valid \emph{only} as tests of a sharp null hypothesis. 
As such, they do not directly address the concerns of critics who regard sharp nulls as inherently  ``restrictive,'' ``uninteresting,'' and ``academic.''
They also generally focus on testing rather than interval estimation.
We show, by contrast, that many randomization tests are also valid under a bounded null hypothesis, thus substantially expanding their applicability and permitting  assessment of the size as well as significance of treatment effects.

\section{Intuition for Validity under the Bounded Null}
\label{sec:illustration}

At the broadest level, Fisher and Neyman shared the same goal: making inferences about the effects of treatment on a given sample of units, based solely on the assumption that treatment was randomly assigned.
In \citeauthor{Neyman23a}'s (\citeyear{Neyman23a}) terms, both were interested in the differences in potential outcomes under treatment and control, $\tau_i = Y_i(1) - Y_i(0)$, for $n$ units indexed by $i$. Under the assumption that the potential outcomes $Y_i(1)$ and $Y_i(0)$ depend only on $i$'s own observed treatment status,\footnote{
  This is known as the stable unit treatment value assumption, or SUTVA \citep{rubin1980randomization}.
} these quantities of interest are fully defined by the \textit{potential-outcome schedule} $\sch$, which specifies all the potential outcomes in the sample.\footnote{
  This is what \citet[323--4]{Rubin05a} calls the ``science'' table. We use ``potential-outcome schedule'' to echo \citeauthor{Freedman09a}'s (\citeyear{Freedman09a}) term ``response schedule.''
} Drawing inferences about the unobserved elements of the potential-outcome schedule is the core task of causal inference.



As illustration, consider a sample of 16 units, 8 of which have been randomly assigned to treatment ($W_i=1$) and 8 to control ($W_i=0$). Table~\ref{tab:example_table} presents what we know about the sample. For each unit, only one potential outcome is observed; the other potential outcome is missing, and so is each unit's treatment effect. 
Suppose that we are interested in assessing the alternative hypothesis that units were positively affected by the treatment. Based on the observed outcomes, we calculate a treated--control difference of means of $t^{\mathrm{obs}} = \bar{Y}_1 - \bar{Y}_0 = +1.13$. How unlikely is a difference of means this large, relative to what would be expected by chance?

\begin{table}[ht]
\centering
\begin{tabular}{rrrrrr}
  \hline
 $i$ & $W_i$ & $Y_i$ & $Y_i(0)$ & $Y_i(1)$ & $\tau_i$ \\ 
  \hline
  1 &   0 & $-$0.90 & $-$0.90 & ? & ? \\ 
  2 &   0 & 0.18 & 0.18 & ? & ? \\ 
  3 &   0 & 1.59 & 1.59 & ? & ? \\ 
  4 &   0 & $-$1.13 & $-$1.13 & ? & ? \\ 
  5 &   0 & $-$0.08 & $-$0.08 & ? & ? \\ 
  6 &   0 & 0.13 & 0.13 & ? & ? \\ 
  7 &   0 & 0.71 & 0.71 & ? & ? \\ 
  8 &   0 & $-$0.24 & $-$0.24 & ? & ? \\ 
  9 &   1 & 2.98 & ? & 2.98 & ? \\ 
  10 &   1 & 0.86 & ? & 0.86 & ? \\ 
  11 &   1 & 1.42 & ? & 1.42 & ? \\ 
  12 &   1 & 1.98 & ? & 1.98 & ? \\ 
  13 &   1 & 0.61 & ? & 0.61 & ? \\ 
  14 &   1 & $-$0.04 & ? & $-$0.04 & ? \\ 
  15 &   1 & 2.78 & ? & 2.78 & ? \\ 
  16 &   1 & $-$1.31 & ? & $-$1.31 & ? \\ 
   \hline
\end{tabular}
\caption{
  The potential-outcome schedule $\sch$ for our 16-unit illustration.
} 
\label{tab:example_table}
\end{table}

Answering this question requires comparing $t^{\mathrm{obs}}$ to its reference distribution under some null hypothesis $\rH_0$. In the Fisherian paradigm, $\rH_0$ consists of an $n$-vector $\vtau^0$ of treatment effects, based on which we can create a null potential-outcome schedule $\schimp_{\vtau^0}$ with  the missing potential outcomes filled in. Using the imputed $\schimp_{\vtau^0}$, we can ``re-run'' our experiment on the same units and calculate the test statistics that would have been observed under alternative permutations of treatment assignment.
The collection of these values across permutations constitutes the statistic's reference distribution under the null hypothesis, conditional on the observed data.\footnote{
  We refer to alternative treatment assignments as ``permutations,'' but strictly speaking treatment assignment should only be permuted if the size of the treatment group is fixed by design. If, instead, treatment is assigned by another  mechanism, such as binomial randomization, then the reference distribution can be generated accordingly, or, alternatively, one could condition on the total of units treated. See, e.g., \citet{Hennessy:kl} for discussion. Regardless, our arguments here readily extend to alternative randomization schemes.}
The proportion of permutations with a value of the test statistic at least as large as $t^{\mathrm{obs}}$ is the $p$-value under $\rH_0$.

This procedure is most straightforward for the sharp null of no effect. Under this hypothesis, $\tau_i = 0\; \forall i$, so the missing potential outcomes imputed under this hypothesis simply equal the observed outcomes (see Table \ref{tab:NullScience}, columns 4--5). The same procedure, however,  may be used for any arbitrary vector $\vtau^0$ of hypothesized treatment effects. If $W_i=1$, we simply impute the missing $Y_i(0)$ as  $\tilde{Y}_i(0)= Y_i - \tau^0_{i}$. Likewise, if $W_i=0$ we impute $\tilde{Y}_i(1)  = Y_i + \tau^0_{i}$. Columns 7--9 of Table \ref{tab:NullScience} illustrate this procedure for a constant-effect null of $\tau_{i} = -1 \; \forall i$, and columns 10--12 do so for a bounded 
 null under which most effects are 0 but two are negative. For any such sharp null, we can generate the reference distribution by repeatedly permuting the treatment variable $W$, determining potential outcomes that would have been observed under that treatment assignment, and calculating the value of the test statistic in each permutation.\footnote{
  See, for example, \citet[660]{Ding16a}. It is worth noting that this procedure differs slightly from that described by \citet[e.g.,][44]{Rosenbaum02a}, who instead proposes testing whether the imputed potential outcomes under control, $\tilde{Y}_i(0) = Y_i - W_i\tau^0_{i}$, satisfy the sharp null of no effects. A disadvantage of the Rosenbaum procedure is its arbitrary choice of $\tilde{Y}_i(0)$ rather than $\tilde{Y}_i(1)$ as a baseline. This choice can affect the results of the test if, for example, the difference of means is the test statistic and the null stipulates heterogeneous additive effects. For intuition on this point, observe that the vector $\mathbf{\tilde{Y}}(1)$ imputed under the non-superiority null (Table \ref{tab:NullScience}, column 11), because it incorporates the $-2$ treatment effect for unit 2, deviates more strongly from the sharp null than does $\mathbf{\tilde{Y}}(0)$, which incorporates only the smaller $-1$ effect for unit 12. The exact $p$-values under the no-effects null are 0.025 for $\mathbf{\tilde{Y}}(1)$ and 0.035 for $\mathbf{\tilde{Y}}(0)$. In expectation, either test is valid, but for any realized treatment assignment their results can differ.
}

\begin{table}[ht]
  \centering
  \begin{tabular}{|rrr|rrr|rrr|rrr|}
    \hline
    \multicolumn{3}{|c|}{Observed Data} 
    & \multicolumn{3}{c|}{$\rH_0$: No Effects} 
    & \multicolumn{3}{c|}{$\rH_0$: Constant Effect}
    & \multicolumn{3}{c|}{$\rH_0$: Non-Superiority}\\
    \hline
     $i$ & $W_i$ & $Y_i$ & $\tilde{Y}_i(0)$ & $\tilde{Y}_i(1)$ & $\tau^0_{i}$
    & $\tilde{Y}_i(0)$ & $\tilde{Y}_i(1)$ & $\tau^0_{i}$ & $\tilde{Y}_i(0)$ & $\tilde{Y}_i(1)$ & $\tau^0_{i}$ \\ 
    \hline
1 &   0 & $-$0.90 & $-$0.90 & $-$0.90 &   0 & $-$0.90 & $-$1.90 &  $-$1 & $-$0.90 & $-$0.90 & 0 \\ 
  2 &   0 & 0.18 & 0.18 & 0.18 &   0 & 0.18 & $-$0.82 &  $-$1 & 0.18 & $-$1.82 & $-$2 \\ 
  3 &   0 & 1.59 & 1.59 & 1.59 &   0 & 1.59 & 0.59 &  $-$1 & 1.59 & 1.59 & 0 \\ 
  4 &   0 & $-$1.13 & $-$1.13 & $-$1.13 &   0 & $-$1.13 & $-$2.13 &  $-$1 & $-$1.13 & $-$1.13 & 0 \\ 
  5 &   0 & $-$0.08 & $-$0.08 & $-$0.08 &   0 & $-$0.08 & $-$1.08 &  $-$1 & $-$0.08 & $-$0.08 & 0 \\ 
  6 &   0 & 0.13 & 0.13 & 0.13 &   0 & 0.13 & $-$0.87 &  $-$1 & 0.13 & 0.13 & 0 \\ 
  7 &   0 & 0.71 & 0.71 & 0.71 &   0 & 0.71 & $-$0.29 &  $-$1 & 0.71 & 0.71 & 0 \\ 
  8 &   0 & $-$0.24 & $-$0.24 & $-$0.24 &   0 & $-$0.24 & $-$1.24 &  $-$1 & $-$0.24 & $-$0.24 & 0 \\ 
  9 &   1 & 2.98 & 2.98 & 2.98 &   0 & 3.98 & 2.98 &  $-$1 & 2.98 & 2.98 & 0 \\ 
  10 &   1 & 0.86 & 0.86 & 0.86 &   0 & 1.86 & 0.86 &  $-$1 & 0.86 & 0.86 & 0 \\ 
  11 &   1 & 1.42 & 1.42 & 1.42 &   0 & 2.42 & 1.42 &  $-$1 & 1.42 & 1.42 & 0 \\ 
  12 &   1 & 1.98 & 1.98 & 1.98 &   0 & 2.98 & 1.98 &  $-$1 & 2.98 & 1.98 & $-$1 \\ 
  13 &   1 & 0.61 & 0.61 & 0.61 &   0 & 1.61 & 0.61 &  $-$1 & 0.61 & 0.61 & 0 \\ 
  14 &   1 & $-$0.04 & $-$0.04 & $-$0.04 &   0 & 0.96 & $-$0.04 &  $-$1 & $-$0.04 & $-$0.04 & 0 \\ 
  15 &   1 & 2.78 & 2.78 & 2.78 &   0 & 3.78 & 2.78 &  $-$1 & 2.78 & 2.78 & 0 \\ 
  16 &   1 & $-$1.31 & $-$1.31 & $-$1.31 &   0 & $-$0.31 & $-$1.31 &  $-$1 & $-$1.31 & $-$1.31 & 0 \\ 
    \hline
    \multicolumn{3}{|c|}{$t^{\mathrm{obs}} = +1.13$} 
    & \multicolumn{3}{c|}{$p=0.040$} 
    & \multicolumn{3}{c|}{$p=0.002$}
    & \multicolumn{3}{c|}{$p=0.027$}\\
    \hline
  \end{tabular}
  \caption{
    Potential-outcome schedules imputed under the sharp null hypotheses of no effects (columns 4--6), a constant effect of $-1$ (columns 7--9), and non-superiority (columns 10--12). The bottom row lists the observed difference of means ($t^{\mathrm{obs}}$) and its one-sided permutation $p$-values under the three null hypotheses.
  }
  \label{tab:NullScience}
\end{table}

The bottom row of Table \ref{tab:NullScience} first reports $t^{\mathrm{obs}}$, the observed difference of means, and then lists the $p$-values of this statistic under each sharp null hypothesis. Notice that the $p$-values under the constant-effect null and the bounded null are both smaller than the $p$-value under the null of no effects whatsoever. This is no coincidence. Rather, as we later prove, the $p$-value under \textit{any} sharp 
 null that satisfies $\tau_{i} \leq 0\ \forall i$ is guaranteed to be no larger than the $p$-value under the sharp null of no effects ($\tau_{i} = 0\ \forall i$).\footnote{
   This presumes that the test statistic is larger in the alternative than under the sharp null. If it is smaller in the alternative, then the statement holds for any null that satisfies $\tau_{i} \geq 0\ \forall i$.
} This result follows from the fact that the reference distribution generated under the null of no effects weakly stochastically dominates the distribution under any non-superiority null. This fact is illustrated visually in Figure~\ref{fig:perm_demo_plot}, which plots the reference distribution for the no-effects null (solid black line), the constant-effects null of $-1$, and ten randomly generated nulls with heterogeneous treatment effects between $-1$ and 0. 
Observe that the density lines of the heterogeneous-effect nulls are all to the left of the solid no-effects density line and to the right of the dashed constant-effect line. Consequently, the cumulative density greater than $t^{\mathrm{obs}}$ (vertical dotted line)---that is, the $p$-value---is largest for the no-effect null, smallest for the constant-effect null, and somewhere in the middle for each of the heterogeneous nulls.

\begin{figure}
\center
\includegraphics[width=0.7\textwidth]{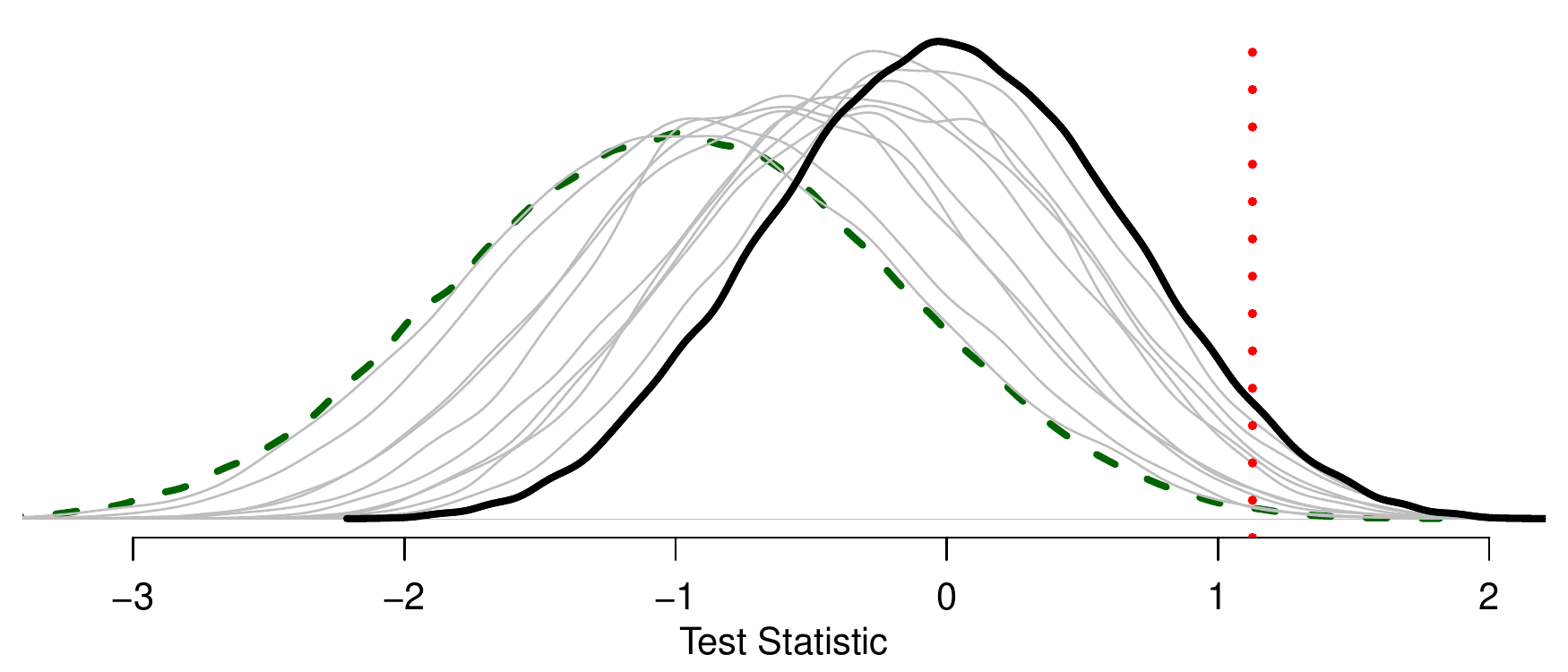}
\caption{
  Reference distributions under different null hypotheses.  The dark black line is the reference distribution of $t = \bar{Y}_1 - \bar{Y}_0$ under the sharp null hypothesis of no effects.  The red dotted line denotes the observed $t^{\mathrm{obs}}$.  The $p$-value is the area under the curve to the right of this, and as it is small we would reject the null. The dashed green line is the reference distribution under the null hypothesis of a constant treatment effect of $\tau^0 = -1$.  The grey lines are a sample of 10 possible reference distributions for ten different non-superiority nulls of no positive effects.  They are all stochastically lower than the sharp zero null, and thus have lower $p$-values.
}
\label{fig:perm_demo_plot}	
\end{figure}

This result has several implications. First, it means that if the sharp null hypothesis of no effect $\tau_{i} = 0\ \forall i$ can be rejected at level $\alpha$, so can any null hypothesis such that $\tau_{i} \le 0\ \forall i$.\footnote{
  Again, assuming without loss of generality that the test statistic is large under the alternative.
} In other words, tests of the no-effect null are conservative tests of the more general bounded null (see Figure \ref{fig:NonSupNull} for a visual representation of the relationship between Fisher's no-effect null and the bounded null). Moreover, this result extends to any constant-effect null, in that rejection of $\rH_0: \tau_{i} = \tau^0\ \forall i$ implies rejection of $\rH_0: \tau_{i} \le \tau^0\ \forall i$. This observation is particularly important for confidence intervals (CIs), which in RI are defined as the collection of sharp null hypotheses not rejected at a given significance level. Typically, randomization CIs are calculated under a constant-effect assumption, but the above result suggests an alternative interpretation that does not require this assumption. Under this re-interpretation, a one-sided randomization CI for a constant effect is also a valid CI for the lower bound on the maximal unit-level effect, $\tau^{\mathrm{max}}$. Thus, an $\alpha$-level CI of $[L, \infty)$ will miss the true $\tau^{\mathrm{max}}$ with probability $\alpha$, and we can conclude with $100 \times (1-\alpha)$\% confidence that at least some units had a treatment effect as large as $L$.

\begin{figure}
  \centering
  \includegraphics[width=0.6\textwidth]{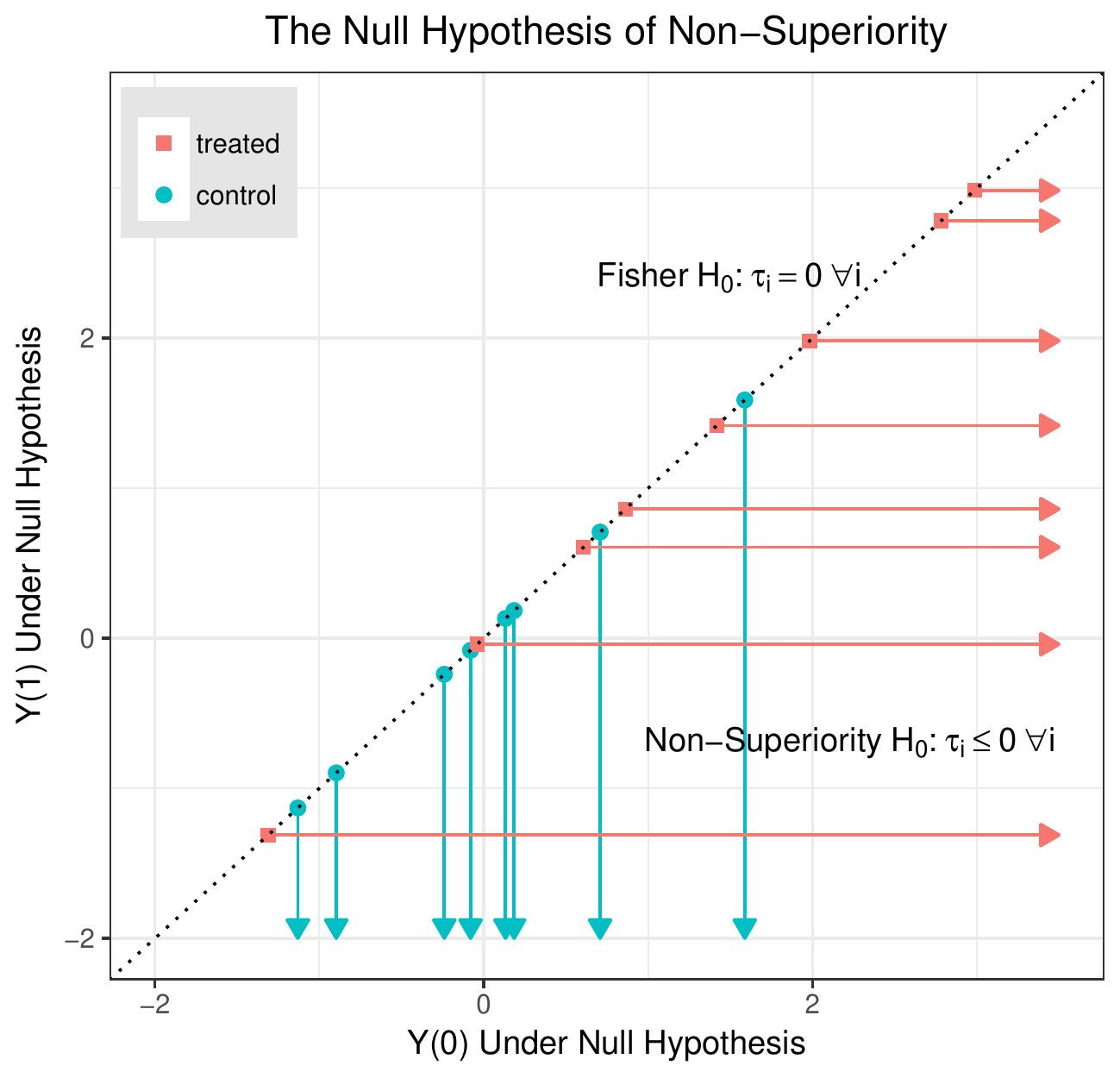}
  \caption{
    The null hypothesis of non-superiority. The horizontal and vertical axis indicate, respectively, $\tilde{Y}_i(1)$ and $\tilde{Y}_i(0)$: the treated and control potential outcomes imputed under the null hypothesis. The dotted 45$^{\circ}$ line represents Fisher's sharp null of no effects, under which $\tilde{Y}_i(1)=\tilde{Y}_i(0)$. The red squares indicate the observed treated outcomes ${Y}_i(1)$, and the horizontal red lines indicate their possible values of $\tilde{Y}_i(0)$ under the bounded 
     null. The blue circles and vertical blue lines indicate the ${Y}_i(0)$ and possible $\tilde{Y}_i(1)$ for units actually assigned to control. 
}
  \label{fig:NonSupNull}
\end{figure}

For intuition, consider the example of a newly developed drug, whose side effects on pain we wish to compare to those of an existing drug. In particular, we wish to assess whether the new drug increases any subject's pain level $Y_i$ relative to the existing drug. Given random assignment to the existing drug ($W_i = 0$) and the new ($W_i = 1$), we can do so using an appropriate one-sided permutation test of the null hypothesis $\rH_0: Y_i(1) - Y_i(0) \equiv \tau_{i} = \tau^0\ \forall i$ for a sequence of $\tau^0$ values. If $L$, the smallest value of $\tau^0$ that cannot be rejected at $\alpha = 0.1$, is greater than 0, then we can conclude with 90\% confidence that the new drug caused at least one subject at least $L$ more pain than the existing drug would have. In other words, we can conclude not only that the sharp null of no effects is implausible, but that at least one subject was adversely affected by the new drug.


\section{Formal Exposition of Validity under the Bounded Null}
\label{sec:perm-tests-non}

Having illustrated the intuition behind our argument, we now turn to formal exposition. As noted above, permutation tests are conventionally conducted under a sharp null hypothesis that precisely specifies the $n$-vector of unit-level treatment effects $\vtau$. Let $\sharpH$ denote the sharp null hypothesis $\vtau = \vtau^0$, where $\vtau^0$ is a vector of hypothesized treatment effects, not necessarily equal to 0 or any other constant. Testing such a sharp null hypothesis entails first choosing a test statistic $T(W, Y)$, which is a function of the treatment assignment vector $W$ and the observed outcomes $Y$.
\footnote{We are simplifying the exposition by considering univariate outcomes and by ignoring covariates. More generally, however, a test statistic may be a function of multiple outcome variables as well as of covariates.}
Since $Y$ is itself a function of $W$ and the potential outcomes $Y(0)$ and $Y(1)$, with $Y_i = Y_i(W_i)$, we can also write the test statistic as $T(W, Y(0), Y(1)) = T(W, \sch)$. $\sch$ being fixed, the randomness in $T(W, \sch)$ comes only from the randomness in $W$.\footnote{Strictly speaking, the $T(W, \sch)$ notation allows for statistics not directly observable (e.g., ones that use both potential outcomes of the units). We use this notation for later clarity in the formal argument, with the implicit restriction of $T(W, \sch)$ that could also be represented as $T(W, Y)$, a function of the observed data.} 
Examples of test statistics include the treated-control difference of means, 
\begin{align}	
  T(W, \sch) = \sum W_i Y_i(1) / \sum W_i - \sum (1-W_i) Y_i(0) /\sum (1-W_i)
  \label{eq:mean_difference},
\end{align}
but there are many other options, several of which we discuss later in the paper.

The observed value of the test statistic is $\tobs = T(\vw^{\text{obs}}, \vy^{\text{obs}})$, where $\vw^{\text{obs}}$ and $\vy^{\text{obs}}$ are, respectively, the observed treatment and outcome vectors.  To evaluate whether a test statistic value as large as $\tobs$ would be unusual under the null hypothesis, we compare it to its permutation distribution under the null. Being sharp, $\sharpH$ allows us to impute the potential-outcomes schedule under the null, $\schimp_{\vtau^0} = \schimp( \wobsV, \yobsV, \sharpH )$, through the relations
 \[
\tilde{Y}_{i}(0) =
\begin{cases}
  w_i = 0 & y^{\text{obs}}_i\\
  w_i = 1 & y^{\text{obs}}_i - \tau^0_{i}
\end{cases}
\]
and 
\[
\tilde{Y}_{i}(1) =
\begin{cases}
  w_i = 0 & y^{\text{obs}}_i + \tau^0_{i}\\
  w_i = 1 & y^{\text{obs}}_i.
\end{cases}
\] 
Using $\schimp_{\vtau^0}$, we can then compute the test statistic value that would have been observed under any alternative realization of $W$. Let $\vw^*$ denote a random draw from the space of potential treatment assignments, and let $t^* = T(\vw^*, \schimp_{\vtau^0})$ be the value of the test statistic given $\vw^*$ and $\schimp_{\vtau^0}$.\footnote{
   The assignment vector $W$ is random according to an assignment mechanism which returns a specific $\vw$ with given probability $\pr{ W = \vw }$. For instance, in the completely randomized design, $\pr{ W = \vw} = {N \choose N_T}^{-1}$ for any $\vw$ such that $\sum w_i = N_T$ for some pre-specified $N_T$. In most typical experiments, all treatment assignments that have non-zero probability are equiprobable.
} The $p$-value of $\tobs$ under $\sharpH$ is thus the probability across permutations of observing a test statistic value at least as large as $\tobs$:
\begin{equation}
  \label{eq:2}
  p_{\vtau^0} \equiv \pr{ T( {\vw}^*, \schimp_{\vtau^0} ) \geq \tobs}.
\end{equation}
This exact $p$-value can be estimated to arbitrary precision by sampling $J$ random permutations of treatment $\vw^*_j$ and calculating
\begin{equation}
  \label{eq:3}
  \hat{p}_{\vtau^0} = 
  \frac{1}{J} \sum_{j=1}^J \ind{ T( \vw^*_j, \schimp_{\vtau^0} ) \geq \tobs } \approx p_{\vtau^0}.
\end{equation}

We next show that, for a class of \textit{effect-increasing} test statistics, a test of the sharp null is actually a valid test of the much more general \emph{bounded null hypothesis}, under which the treatment effects are bounded on one side by the sharp null but otherwise may be arbitrarily heterogeneous. When the test statistic is an increasing function of the treatment effects, such as the difference of means, this bounded null is one of \textit{non-superiority}---that is, one bounded above by the sharp null:
\begin{align*}
\nonsupH : \tau_i \leq \taumax_{i} \equiv \tau^0_{i} \quad \forall i \in 1\ldots n . \tag{Non-Superiority}
\end{align*}
Analogously, when the test statistic is decreasing in treatment effects, the weak null is a lower bound  of \textit{non-inferiority}:
\begin{align*}
\rH_{\vtaumin} : \tau_i \geq \taumin_{i} \equiv \tau^0_{i} \quad \forall i \in 1\ldots n. \tag{Non-Inferiority}
\end{align*}
For ease of exposition we here focus on the non-superiority null $\nonsupH$, which we simply refer to as the bounded null, but all our results can be extended to $\noninfH$ by multiplying $Y$ by $-1$.

Unlike the sharp null $\sharpH$, which corresponds to a single null potential-outcomes schedule $\schimp_{\vtau^0}$, the bounded null $\nonsupH$ corresponds to an infinitely large set of such schedules that are consistent with the observed data and the treatment-effect bound (e.g., every set of points on the lines in Figure \ref{fig:NonSupNull}). Thus, if we reject this null, we are rejecting a set of null schedules (and associated treatment-effect vectors) rather than a single point null. Because each weak null permits infinitely many possible potential-outcome schedules, each of which implies a different null distribution for the test statistic, no single $p$-value will be exact (i.e., have a false-positive rate of exactly $\alpha$) for all possible schedules. We can show, however, that for a class of test statistics the $p$-value associated with any of these null distributions will be bounded above by the $p$-value under $\sharpH$, making the sharp null $p$-value valid but possibly conservative for the bounded null. (A hypothesis test is \emph{conservative} if, for any nominal significance level $\alpha$, the true probability of incorrectly rejecting the null hypothesis is no greater than $\alpha$.)



The above property holds for permutation tests that employ an effect-increasing test statistic. To define this class of statistics, we must first introduce the notion of ordering potential-outcome schedules:
\paragraph{Definition: Ordering of Potential-Outcome Schedules.}
Two potential-outcome schedules $\sch$ and $\sch'$ are ordered as $\sch \preceq \sch'$ if $Y_i(1) \leq Y_i'(1)$ and $Y_i(0) \geq Y_i'(0)\ \forall i \in 1\dots n$. That is, $\sch \preceq \sch'$ if and only if no unit's potential outcome under treatment is smaller in $\sch'$ than in $\sch$ and no unit's potential outcome under control is larger in $\sch'$ than in $\sch$. An immediate consequence of such ordering is that the individual treatment effects are also ordered: $\tau_i \leq \tau_i'\ \forall i$. 

Our class of statistics is then defined as those that satisfy the following:
\paragraph{Definition: Effect-Increasing (EI).}  
A test statistic $T$ is \emph{effect-increasing} if, for two potential-outcome schedules $\sch$ and $\sch'$, $\sch \preceq \sch'$ 
 implies $T(\vw, \sch ) \leq T(\vw, \sch' )$ for all allowed realizations $\vw$ of the  treatment variable $W$. In other words, a test statistic $T$ is effect-increasing if it is weakly increasing in the potential outcomes under treatment and weakly decreasing in the potential outcomes under control \citep[cf.\@][37--8]{Rosenbaum02a}. Since $\sch \preceq \sch'$ implies $\tau_i \leq \tau_i'\ \forall i$, an EI statistic is also increasing in the individual treatment effects (hence the label ``effect-increasing''). 

For effect-increasing statistics we have the following proposition:
\paragraph{Proposition 1.} If $\vtau_0 = \vtaumax$, a permutation test of $\sharpH$ is a conservative (and thus valid) test of $\boundedH$ in that for all $\sch_h \in \boundedH$
\[ \pr{ \mbox{Reject } \boundedH | \sch_h } = \pr{ T(\vw^*, \schimp_{\vtau^0})
 \geq T( \vw, \sch_h ) } \leq \alpha , \]
with the probabilities taken across both $\vw$ and $\vw^*$, each random draws from the assignment mechanism. (For a formal proof of this propoosition, see Appendix \ref{sec:proof-validity-under}.)

This result means that if we consider a test of a given sharp null as a test of the associated bounded null, we still have a valid test. In particular, rejecting $\boundedH$ when the nominal $p$-value for a permutation test of $\sharpH$ is less than $\alpha$ is a valid testing procedure for $\boundedH$.

\section{Confidence Intervals for Maximum/Minimum Effects}
\label{sec:confidence-intervals}

In most settings, researchers are not content merely to test a single null hypothesis. Rather, they wish to summarize the information the data convey regarding quantities of interest. One way to do this is to estimate confidence intervals that indicate the range of plausible parameter values. Because of the duality between hypothesis testing and interval estimation, a general method of CI construction is to define the $1-\alpha$\% CI as the set of parameter values that cannot be rejected at some significance level $\alpha$. In the case of a one-sided $t$ test, for example, the lower bound of the 90\% CI for the mean $\mu$ is the smallest value of $\mu^0$ such that the weak null hypothesis $\mu \le \mu^0$ cannot be rejected at $\alpha=0.1$. Analogously, in the case of an effect-increasing randomization test, the lower bound of the CI for the maximum unit-level effect $\taumax_{i}$ is the smallest value of $\tau^0$ such that the non-superiority hypothesis $\tau_i \leq \tau^0\; \forall i$ cannot be rejected.




For sharp null inference, confidence intervals are generally obtained by inverting a sequence of sharp-null level-$\alpha$ tests of hypotheses $\sharpH$.
Typical practice would be to consider a sequence of constant shift hypotheses $\rH_{\tau^0}: \tau_i = \tau^0\ \forall i$.
For each candidate value $\tau^0_h = \tau^0$ we calculate \[
p( \tau^{0}_h) = \pr{ T( \vw^*, 
\schimp_{\tau^{0}_h}) \geq \tobs },
\] and if $p( \tau^0_h ) \leq \alpha$, we conclude that $\tau^0_h$ is implausible.
This gives a confidence set of plausible $\tau^0$ values of
\[ \mathrm{CI} \equiv \left\{ \tau^0_h : p( \tau^0_h ) \geq \alpha \right\}. \]
The confidence sets are random depending on the randomization.
They are valid in the sense that if the treatment effect is in fact constant ($\sch \in \rH_{\tau}$ for some $\tau$), then the confidence set CI will contain $\tau$ with probability at least $1-\alpha$.
Unfortunately, if the treatment effect is not constant, there is no immediate reason for CI to contain any particular summary of the treatment effects (e.g., the ATE). This is one of the primary complaints against permutation inference. 

By viewing these as tests of a bounded null, however, the associated confidence interval does in fact have a general interpretation that does not depend on the implausible assumption of constant effects. 
In particular, the conventional randomization CI for a constant effect can  be interpreted as a confidence set on the maximum treatment effect in the sample (or minimum, in the case of a non-inferiority hypothesis). In other words, given a one-sided CI $[L,\infty)$ we can say that we are at least $1-\alpha$ confident that some units have a treatment effect of at least $L$. (For a proof of this proposition, see Section \ref{sec:proof-validity-ci} in the Appendix.) This statement does not depend on any specific structure on the individual effects; they may be arbitrarily heterogeneous. That being said, the more homogenous the effects, the more individual effects we would expect to be in the CI. Of course, in the limiting case of no effect heterogeneity (i.e., a constant effect), $[L,\infty)$ will, as discussed above, contain all the individual effects with probability $1-\alpha$.




It should be emphasized that the permutation CI for the maximum effect will have correct coverage regardless of the test statistic used, as long as that statistic is effect-increasing. The CI itself will, however, vary depending on the test statistic's power against different alternative hypotheses. In particular, unless the treatment effects are close to constant, using a statistic sensitive to the central tendency may result in relatively uninformative confidence bounds for the maximum. Thus, if heterogenous effects are expected, it may be preferable to use a statistic that is sensitive to the largest effects, such as the Stephenson rank sum (discussed below).

\section{Effect-Increasing Test Statistics}
\label{sec:test-stat-power}

All effect-increasing test statistics are valid under the bounded null, but what are they? We next show that the difference of means, the Wilcoxon rank sum, and other common test statistics are all effect-increasing. 
We also note that others, in particular the studentized difference of means, are not. 
We then briefly touch on the issue of statistical power against different alternative hypotheses.

As discussed above, an effect-increasing statistic grows as the treatment potential outcomes get larger or the control potential outcomes get smaller.
The difference in means, $T( Y, W ) = \bar{Y}_1 - \bar{Y}_0$, is an intuitive example of such a statistic: if we increase any $Y_i(1)$ then the difference in means will either go up (if $W_i = 1$) or remain unchanged (if $W_i = 0$).
Similarly, if we lower any $Y_i(0)$ the control side mean will either shrink, which will raise $t$, or stay the same.
This general argument applies for several broad classes of statistics, as we present below in two lemmas.  
Proofs are in Appendix~\ref{sec:proof-that-rank}.

First, statistics that look like differences in means, or sums of rescaled treatment outcomes, are all effect-increasing:
\begin{lemma}
 Let $Q(\cdot)$ be a non-decreasing function of outcomes and $a(W), b(W) \geq 0$ be two scaling numbers that can depend on treatment assignment.  
 Then the class of statistics defined as 
\[ T( Y, W ) = a(W) \sum_{i=1}^n W_i Q( Y_i ) - b(W) \sum_{i=1}^n (1-W_i) Q( Y_i ) \]
are effect-increasing.
The difference-in-means is a special case with $a(W) = 1/n_T$, $b(W) = 1/n_C$, and $Q(Y_i) = Y_i$.\footnote{As a mild technical condition, define $a(W)$ and $b(W)$ as constant when there are no treated or no control units in the case of assignment mechanisms that have variable numbers of units treated.}
\end{lemma}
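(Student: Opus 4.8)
The plan is to verify the effect-increasing property straight from its definition, using the crucial observation that the weights $a(W)$ and $b(W)$ depend only on the treatment-assignment vector and not at all on the potential-outcome schedule. So fix an arbitrary allowed realization $\vw$ of $W$ and two schedules with $\sch \preceq \sch'$; it suffices to show $T(\vw,\sch) \le T(\vw,\sch')$, since the definition requires this for each such $\vw$. The first step is to record how the \emph{observed} outcome entering the statistic responds to the replacement of $\sch$ by $\sch'$: a unit with $w_i = 1$ contributes its treated potential outcome $Y_i(1)$, which weakly increases under $\preceq$, while a unit with $w_i = 0$ contributes its control potential outcome $Y_i(0)$, which weakly decreases. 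Writing $Y_i$ and $Y_i'$ for the observed outcomes under $\sch$ and $\sch'$ for this fixed $\vw$, we thus have $Y_i \le Y_i'$ whenever $w_i = 1$ and $Y_i \ge Y_i'$ whenever $w_i = 0$.

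Next I would treat the two sums separately. Since $Q$ is non-decreasing, $w_i = 1$ gives $Q(Y_i) \le Q(Y_i')$, so $\sum_i w_i Q(Y_i) \le \sum_i w_i Q(Y_i')$; multiplying by $a(\vw) \ge 0$ (a fixed scalar, as $\vw$ is held fixed) preserves the inequality, so the treated term weakly increases. Likewise $w_i = 0$ gives $Q(Y_i) \ge Q(Y_i')$, hence $\sum_i (1-w_i) Q(Y_i) \ge \sum_i (1-w_i) Q(Y_i')$; multiplying by $b(\vw) \ge 0$ and negating flips this to $-b(\vw)\sum_i (1-w_i) Q(Y_i) \le -b(\vw)\sum_i (1-w_i) Q(Y_i')$, so the (negated) control term also weakly increases. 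Adding the two weak inequalities yields $T(\vw,\sch) \le T(\vw,\sch')$, which is exactly the effect-increasing condition. The difference-in-means claim then follows immediately by substituting $a(W) = 1/n_T$, $b(W) = 1/n_C$, and $Q$ the identity.

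There is no genuinely hard step here; the argument is essentially a bookkeeping exercise. The only points that require a little care are (i) that $a(\vw)$ and $b(\vw)$ must be treated as fixed non-negative constants once $\vw$ is fixed, so that their sign and magnitude never disturb the monotone direction of either sum, and (ii) the boundary convention noted in the footnote, namely that when $\vw$ has no treated (or no control) unit one takes $a(\vw)$ (resp.\ $b(\vw)$) to be a fixed constant, so that the corresponding empty sum causes no difficulty. I would also flag that the same three-line manipulation, with $Q$ taken to be a general non-decreasing transformation, is precisely what gets reused in the rank-based lemma that follows, so stating it at this level of generality is deliberate.
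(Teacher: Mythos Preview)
Your proposal is correct and follows essentially the same approach as the paper's proof: fix $\vw$, use monotonicity of $Q$ together with the ordering $\sch \preceq \sch'$ to bound each sum separately, then use non-negativity of $a(\vw)$ and $b(\vw)$ to preserve the inequalities and add. The paper phrases this as showing that the treated-sum piece and the negated control-sum piece are each effect-increasing and then invoking that a sum of effect-increasing statistics is effect-increasing, but this is exactly your ``treat the two sums separately and add'' step in different words.
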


Furthermore, many common rank statistics are also effect-increasing:
\begin{lemma}
For continuous outcomes any statistic of the form
\[ T(W, Y) = a(W) \sum_i W_i Q( R_i ) - b(W) \sum_i (1 - W_i) Q(R_i), \]
where $R_i$ is the rank of $Y_i$, $a(W)$ and $b(W)$ are nonnegative functions of the assignment vector, and $Q(R_i)$ is some non-decreasing function of the ranks, is effect-increasing.
\end{lemma}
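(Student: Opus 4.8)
The plan is to fix an arbitrary admissible treatment assignment $\vw$ and show directly that $\sch \preceq \sch'$ forces $T(\vw, \sch) \le T(\vw, \sch')$; effect-increasingness is then immediate from the definition. With $\vw$ fixed, the only quantities entering $T$ are the observed outcomes $y_i = Y_i(w_i)$ under $\sch$ and $y_i' = Y_i'(w_i)$ under $\sch'$, and the hypothesis $\sch \preceq \sch'$ says exactly that $y_i' \ge y_i$ for every treated unit and $y_i' \le y_i$ for every control unit. Invoking the continuity of the outcomes, I would assume without loss of generality that the $2n$ numbers in $\{y_i\} \cup \{y_i'\}$ are all distinct; this is a generic condition, and since an arbitrarily small perturbation changes none of the ranks $R_i$ entering $T$, it costs nothing.

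Next I would deform $y$ into $y'$ one coordinate at a time --- send $y_1$ to $y_1'$, then $y_2$ to $y_2'$, and so on --- so that each step slides a single treated coordinate monotonically upward, or a single control coordinate monotonically downward, past a fixed set of distinct values. The crossings therefore occur one at a time; between crossings the rank vector $(R_i)$, and hence $T$, is constant. At a crossing the sliding unit $j$ and the unit $\ell$ it passes are momentarily rank-adjacent, occupying ranks $r$ and $r+1$, and passing simply swaps these two ranks while leaving every other rank untouched. The core of the proof is then a short case check of the induced change $\diff T$: if $j$ and $\ell$ lie on the same side of treatment, the multiset of ranks within each treatment group is unchanged, so $\diff T = 0$; if $j$ is treated and slides upward past a control unit $\ell$ (or, symmetrically, $j$ is a control unit sliding downward past a treated unit $\ell$), then $\diff T = \bigl(a(\vw) + b(\vw)\bigr)\bigl[Q(r+1) - Q(r)\bigr] \ge 0$ because $Q$ is non-decreasing and $a(\vw), b(\vw) \ge 0$. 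Summing $\diff T$ over all crossings in all the coordinate-by-coordinate steps yields $T(\vw, \sch) \le T(\vw, \sch')$, and since $\vw$ was arbitrary this is the claim.

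The main obstacle is careful bookkeeping rather than any real difficulty. One has to be sure that at each crossing precisely two ranks change and each changes by exactly $\pm 1$ (this is where distinctness of the outcomes and handling one crossing at a time are used), and that the sign accounting in the case check correctly tracks which of the $+a(\vw)Q(\cdot)$ and $-b(\vw)Q(\cdot)$ contributions units $j$ and $\ell$ make before and after the swap. The continuity hypothesis in the statement is exactly what legitimizes the reduction to distinct outcomes and non-simultaneous crossings; with ties one would have to commit to a tie-breaking rule for $R_i$ and the statistic could fail to be monotone at such boundary configurations. The argument parallels, but is genuinely more delicate than, the proof of the preceding lemma for statistics in the raw outcomes, where $Q(Y_i)$ moves monotonically with $Y_i$ directly and no rank bookkeeping is needed.
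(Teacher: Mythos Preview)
Your proposal is correct and shares the paper's overall strategy: fix the assignment, then move from $\sch$ to $\sch'$ by changing one observed coordinate at a time and verify that each such change can only increase $T$. The difference is in how the single-coordinate step is handled. The paper treats the move of one outcome as a single discrete jump: it identifies the set $G$ of treated units whose ranks increase by one, observes that the moving unit's rank drops below all of $G$ while no unit in $G$ can rise above the moving unit's old rank, and then uses a telescoping pairwise comparison $Q(R'_k) + Q(R'_{j_1}) + \cdots + Q(R'_{j_m}) \le Q(R_{j_1}) + \cdots + Q(R_{j_m}) + Q(R_k)$ to bound the treated sum. You instead slide the coordinate continuously and factor the jump into a sequence of adjacent rank transpositions, each of which is trivial to analyze. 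Your decomposition is finer and makes the sign check essentially a one-line case analysis, at the cost of the small perturbation argument to ensure crossings are simple; the paper's argument avoids the perturbation but requires the more delicate bookkeeping with the index set $G$. Either route works, and both rely on the no-ties hypothesis in the same essential way (the paper notes explicitly that its argument fails with ties).
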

For many statistics this result extends to outcomes with ties.
In particular, see Appendix~\ref{sec:proof-that-rank} for derivations for the Wilcoxon rank sum statistic and Stephenson rank sum statistics (see below).

The EI property does \textit{not} hold, however, if the difference of means is ``studentized'' by a consistent estimate of its standard error: \[
t = \frac{\bar{Y}_T - \bar{Y}_C}{\sqrt{s^2_T / N_T + s^2_C / N_C}}.
\] 
The $t$ statistic is not EI because a large increase in one unit's treated outcome can have such a large effect on the standard deviation $s_T$ that it outweighs the effect on the mean $\bar{Y}_T$, thus decreasing the statistic overall. 
More obviously, statistics not sensitive to one-sided location shifts---such as the absolute difference of means, the difference of variances, and the two-sided Kolmogorov-Smirnov statistic---are also not effect-increasing.

Any effect-increasing statistic will give a valid test of the general bounded null.
They are not all the same, however.
In particular, a given statistic can be more or less sensitive to different alternatives, which can influence power.
One class of statistics particularly well-suited to this circumstance are Stephenson rank statistics \citep{Stephenson81a, Stephenson85a}.
Stephenson rank statistics are two-sample statistics with a score function of \[ 
Q(R_i) =
\begin{cases}
  {R_i - 1 \choose s - 1} & R_i \geq s \\
  0 & \mbox{ otherwise} 
\end{cases}
\] for some fixed integer $s \ge 2$. 
Stephenson rank statistics are equivalent to summing the number of subsets of size $s$ in which the largest response is in the treated group. The Stephenson rank statistic with $s=2$ is almost identical to the Wilcoxon rank sum. However, as $s$ increases beyond 2, the Stephenson ranks place more and more weight on the largest responses.

Stephenson rank statistics are particularly interesting in the context of this paper due to their power to detect uncommon-but-dramatic responses to treatment \citep{Rosenbaum07a}. Intuitively, this is because as the subset size $s$ increases, it becomes increasingly likely that the largest response in a given subset will be one with an unusually large treatment effect.\footnote{
  Examining the asymptotic relative efficiency of a closely related class of test statistics, \citet[196]{Conover88a} find that when a only small fraction of treated respond, the optimal subset size $s$ is between 5 and 6.
} Thus, compared to the difference of means and the Wilcoxon rank sum, whose power is greatest against a constant location shift, the Stephenson ranks have greater power against alternatives under which effects are heterogeneous and a few are highly positive (relative to the null). 
This sensitivity to extreme treatment effects leads to tighter confidence intervals for the maximum effect when the maximum differs greatly from the mean or median. It is even possible for a Stephenson rank test to reject the non-superiority null for positive values when the ATE estimate is \textit{negative}, if some treatment effects are sufficiently positive. 
Thus, when treatment effects are heterogeneous, the behavior of the Stephenson test can differ markedly from the rank sum or difference of means, while, like them, still providing a valid test of the bounded null hypothesis.


\section{Practical and Theoretical Relevance}
\label{bounded}




So far, we have demonstrated that effect-increasing randomization tests can be interpreted as tests of a bounded null hypothesis,
 that inverting a sequence of such tests produces valid confidence intervals for the maximum effect, and that many familiar test statistics are effect-increasing.
But do these formal results have any practical relevance, or are they as ``uninteresting and academic'' as the sharp null? We believe that both testing bounded null hypotheses and interval estimation for maximum or minimum effects are indeed relevant to many social-scientific questions. 

There are many settings where bounded null hypotheses are either invoked as technical assumptions or given special normative status; in either case, testing and possibly rejecting the bounded null is both scientifically interesting and practically meaningful. Many statistical techniques, for instance, rely on a monotonicity assumption that one causal variable has non-negative or non-positive effects on another. Instrumental-variable (IV) estimation, Manski bounds, mediation analysis, the signing of paths on directed acyclic graphs, and the signing of selection bias are all examples of such methods. Obviously, the monotonicity assumption is equivalent to the hypothesis that the unit-level effects are bounded at 0 on one side. In many social contexts, the monotonicity assumption is considered ``reasonable'' \citep[206]{MorganWinship07a}, ``plausible'' \citep[452]{Imbens15a}, or at least ``defensible'' \citep[217]{Gelman06a}, suggesting that rejecting it and other bounded null hypotheses can be substantively interesting. Even applied studies that explicitly invoke monotonicity, however, usually do not test it empirically, perhaps because doing so fits poorly with conventional methods' focus on average effects. As we show in Section \ref{sec:test-monot-an}, our re-interpretation of randomization inference provides a straightforward basis for evaluating monotonicity.

In addition to being invoked as substantive assumptions to justify using certain statistical methods, bounded null hypotheses frequently have special normative or theoretical significance. From a normative perspective, there are many circumstances where ``nonmaleficence,'' or avoidance of harm, is an important moral standard: even if the overall or expected effect of an action would be beneficial, one nevertheless has an obligation to avoid harming any individual \citep[e.g.,][chapter 5]{BeauchampChildress12a}. According to this standard, a policy or other intervention whose average or median effect is positive may nevertheless be ethically objectionable if it has a single negative effect. This emphasis on avoiding harm even if the net benefit is positive appears in many contexts, ranging from U.S. tort law to experimental moral philosophy \citep[e.g., in variants of the ``trolley problem'';][]{Thomson85a}.
A related ethical standard, prominent in economics, is ``Pareto improvement''---a change in the allocation of resources that increases average welfare while hurting no one.  If, for example, campaign contribution limits benefit all citizens, even campaign donors themselves \citep{Coate04a}, then from a consequentialist perspective there should be no objection to such limits. By contrast, a policy that benefits most but harms some, such as a free-trade agreement without an adequate compensation scheme for losers, may fail to pass ethical muster.\footnote{
  When compensation schemes are workable then finding a Pareto-optimal policy simplifies to maximizing the ATE and implementing such a compensation scheme. However, compensations schemes are often not workable. The outcome, such as quality-adjusted life years, may not be fungible. Compensation schemes may not be incentive compatible, or may be costly to implement. Most fundamentally, compensation schemes require estimating with sufficient precision every person's counterfactual gain or loss---that is, their individual treatment effect---which is generally impossible without strong assumptions.
  }
%
Whether couched as nonmaleficence or Pareto improvement, the ethical standard is again a bounded hypothesis, which we may wish to test empirically.

\begin{figure}
  \centering
  \includegraphics[width=\textwidth]{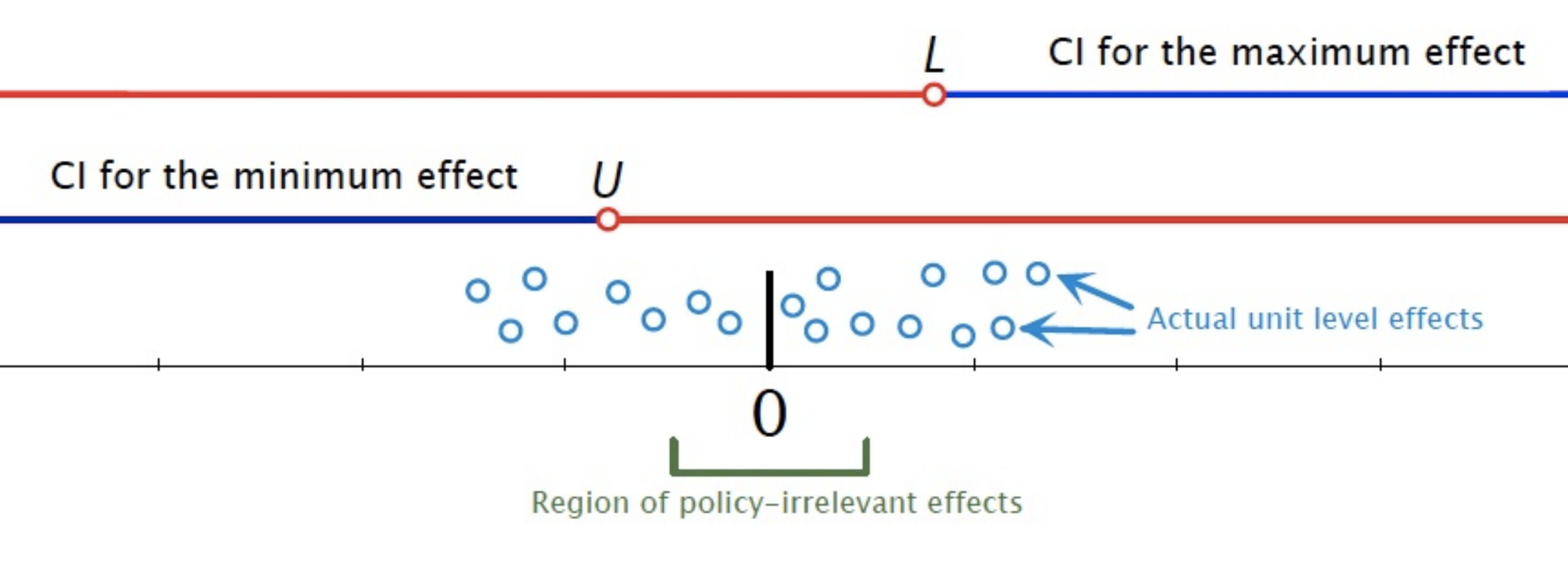}
  \caption{
    Two one-sided confidence intervals for maximum and minimum effects (blue lines). The confidence intervals $[L, \infty)$ and $(-\infty, U]$ do not overlap in this illustration, which is possible if they are based on a test statistic sensitive to the tails of the distribution of treatment effects (e.g., the Stephenson rank sum).
}
  \label{fig:2ci}
\end{figure}

Of course, scholars generally seek to infer not only the sign of treatment effects, but also their magnitude. Our results are relevant for these purposes as well. In some circumstances, for example, it may be of interest whether a treatment had any effect large enough to be ``policy relevant'' or otherwise substantively important. If such a relevance threshold can be stipulated \textit{a priori}, then an effect-increasing randomization test of the null hypothesis that all effects are bounded by this threshold provides an answer to this question. If the hypothesis is rejected, then we can conclude that there were in fact some policy-relevant effects. Randomization CIs provide a more general answer to this question, enabling readers with possibly differing thresholds to come to their own conclusions. With test statistics sensitive to the extremes, it is even possible to test and form confidence intervals for policy-relevant effects in opposite directions. If effects are sufficiently heterogeneous, the two confidence intervals may not even overlap, indicating the presence of both positive and negative effects beyond thresholds of substantive importance (see Figure \ref{fig:2ci}). This inference may take on particular significance in small samples. If we can conclude, for example, that at least one unit in a 20-unit sample had a policy-relevant effect, then, informally speaking, this provides evidence that 5\% of the larger population would experience such an effect. In short, both tests of bounded null hypotheses and confidence intervals for maximum or minimum effects can help answer questions of practical and theoretical relevance to applied researchers, a fact we illustrate with the applications that follow.

\section{Applications}
\label{sec:application}

We illustrate the practical implications of our results with re-analyses of two published studies. In the first, we show how randomization tests' validity under the bounded null provide a basis for assessing the assumption that an instrumental variable has monotonic effects on the treatment. In the second application, we show how RI can be used to make inferences about effect sizes in a 16-unit randomized experiment.

\subsection{Testing Monotonicity of an Instrumental-Variable Analysis}
\label{sec:test-monot-an}

The assumption that the instrument has monotonic effects on the treatment, though conventionally invoked for identification of IV estimates \citep{AngristImbensRubin96a}, is rarely evaluated in empirical applications. Recently, however, the issue of non-monotonicity has received attention in the active literature on school-entry age, which numerous studies instrument for using laws regulating entry age (\citet{Aliprantis12a, BaruaLang16a}, for an overview, see \citet{FioriniStevens14a}). 
Typically, these laws select an arbitrary date of birth before which children are allowed to enter school in given calendar year. If the cutoff date is January 1, for example, most children born in December will be about 11 months younger when they enter school than children born in January. Due to imperfect compliance with the instrument, however, some fraction of December children may ``redshirt'' and start the following school year, at which time they will be one month \textit{older} than than January children who started on time. Unless the December children who redshirt would also have redshirted had they been born in January, monotonicity is violated. That is, the effect of December birth on school-entry age is typically negative, but for a few children it is positive
(an analogous logic holds for January children who start early).

\begin{figure}
  \centering
  \includegraphics[width=.8\textwidth]{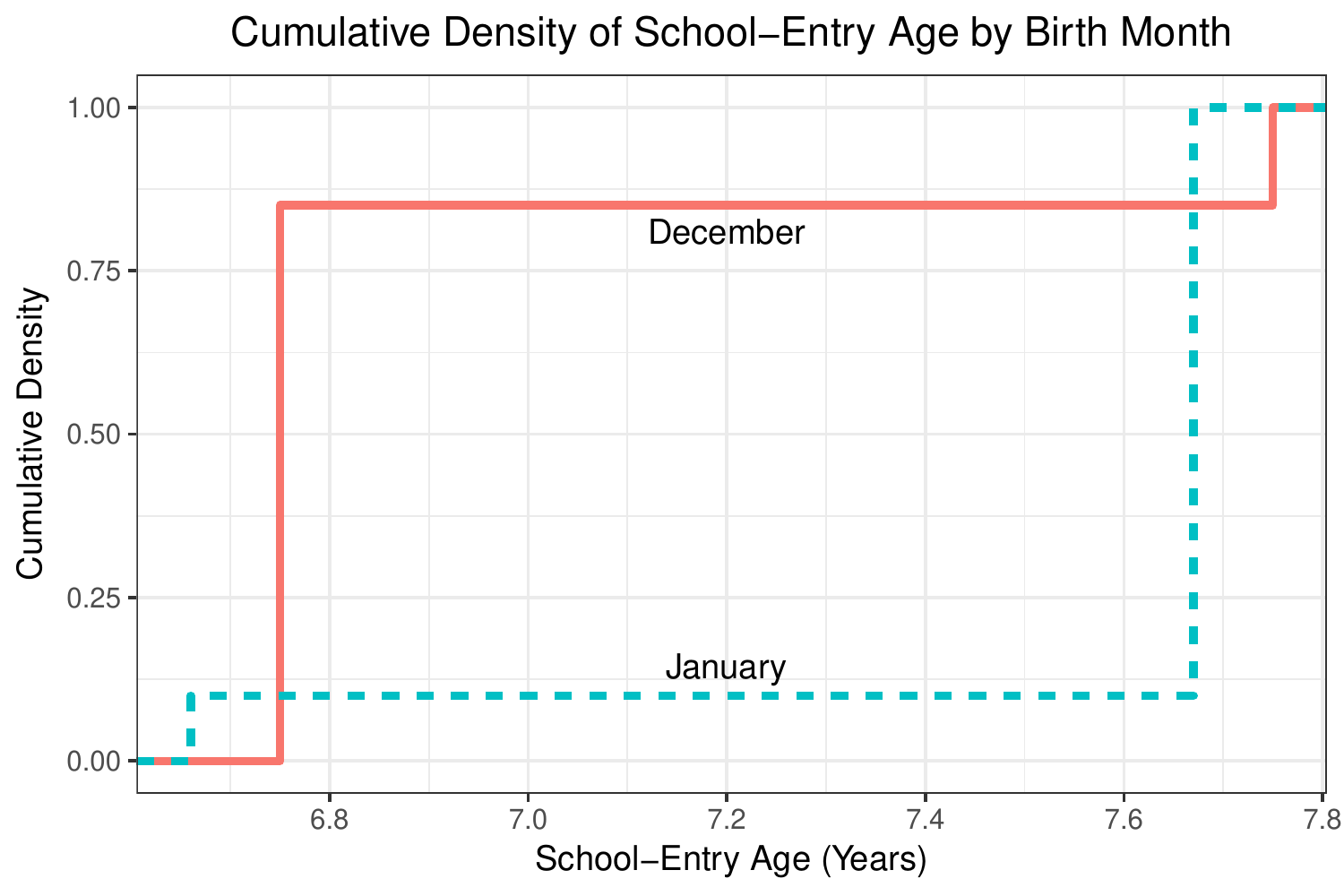}
  \caption{Empirical CDF of school-entry age by birth month (\cite{BlackEtAl11a}).}
  \label{fig:cdf}
\end{figure}

For a simple illustration of how RI can be used to evaluate monotonicity, we re-analyze data on 104,000 children from \poscite{BlackEtAl11a} IV study of school-entry age, which have previously been analyzed from a sampling-based perspective by \citet{FioriniStevens14a}. The latter authors note that the cumulative distribution functions (CDFs) of December- and January-born children in these data cross each other, suggesting a violation of monotonicity.\footnote{
  The crossing could also indicate that the instrument is not exogenous, but here we assume that birth month is as-if randomly assigned. Based on an asymptotic test of stochastic dominance, \citet{FioriniStevens14a} reject the null that the true CDFs do not cross. 
} Figure \ref{fig:cdf} indicates this clearly. Most children started school at an older age if they were born in January rather than December. Indeed, this first-stage relationship is incredibly strong, with an average effect of 0.7 years and an $F$ statistic of 106,146. This would conventionally be considered persuasive evidence of a valid instrument. Note, however, that at the tails of the distribution the relationship between month of birth and entry age reverses: January birthdays predominate among the youngest starters, and December does so among the oldest. 

If we designate January birth as the assigned-to-treatment condition, then the monotonicity assumption is equivalent to the null hypothesis of non-inferiority: being born in January did not cause any child to go to school at an earlier age than they would have if born in December. We test the non-inferiority null using two different effect-increasing test statistics. The first is simply the January--December difference in the proportion of starting ages above 7.67 years (or, equivalently, above the 92.5th percentile of the pooled distribution). The second is the Stephenson rank statistic with subset size 10.\footnote{
  In this application, the performance of Stephenson tests with different subset sizes is essentially given by the redshirting rate among December births, which is 15\%. Specifically, the Stephenson rank statistic totals the number of times a December child will be oldest in a random subset of size $m$, which is equal to the probability that the subset includes either (a) at least one redshirted December child or (b) no on-time January children. Non-inferiority is resoundingly rejected for all subset sizes greater than $m=6$.
} Under either test, the null hypothesis of non-inferiority can be rejected with an exact $p$-value less than $10^{-16}$. We can therefore confidently conclude that being born in January decreased school-entry age for at least some students: the IV monotonicity assumption is violated in this application.\footnote{
  It may still be worthwhile to go forward with the analysis, perhaps conducting a sensitivity analysis for the resulting bias. But it should be noted that redefining the treatment in terms of a dichotomous age cut-off (e.g., below 7.67 years), which would seem to restore monotonicity, would violate the portion of SUTVA that rules out hidden versions of treatment (in this case, different entry ages all defined as ``below 7.67'').
}

\subsection{Inferring Effect Sizes in a Small-Sample Experiment}
\label{sec:inferr-effect-sizes}

For our second application, we move from the relatively easy task of testing hypotheses in a large sample to the more challenging one of inferring effect sizes in a very small one. Specifically, we re-analyze data from \citet{Wantchekon03a}, who convinced Beninese presidential candidates to randomly vary the content of their campaigns across villages.\footnote{
  T
  he village-level data are reported in Table 2 of \citet[412]{Wantchekon03a}.
} The original experiment involved three conditions, but our re-analysis focuses on two: a treatment condition consisting of a purely policy-based campaign and a control condition consisting of a mix of policy and clientelist messages. In each of 8 electoral districts, one village was assigned to each condition, for a total of 16 paired units. Wantchekon's main expectation was that policy campaigns would be relatively ineffective and thus would decrease candidate vote share relative to mixed campaigns. However, given that the same policy proposal may evoke divergent reactions from different groups of voters, it is plausible that the policy campaign had heterogenous effects, a possibility noted by \citet[413]{Wantchekon03a}.\footnote{
  According to the original paper all the mean differences between treatment conditions (and even individual villages) are highly statistically significant, but, as \citet{Green08a} note, \citeauthor{Wantchekon03a}'s analysis ignores village-level clustering and thus vastly overstates the precision of the estimates. 
} Of interest, then, is not only whether policy campaigns were generally less effective but also whether for at least some villages they were \textit{more} effective---and if so, by how much.

\begin{figure}
  \centering
  \includegraphics[width=\textwidth]{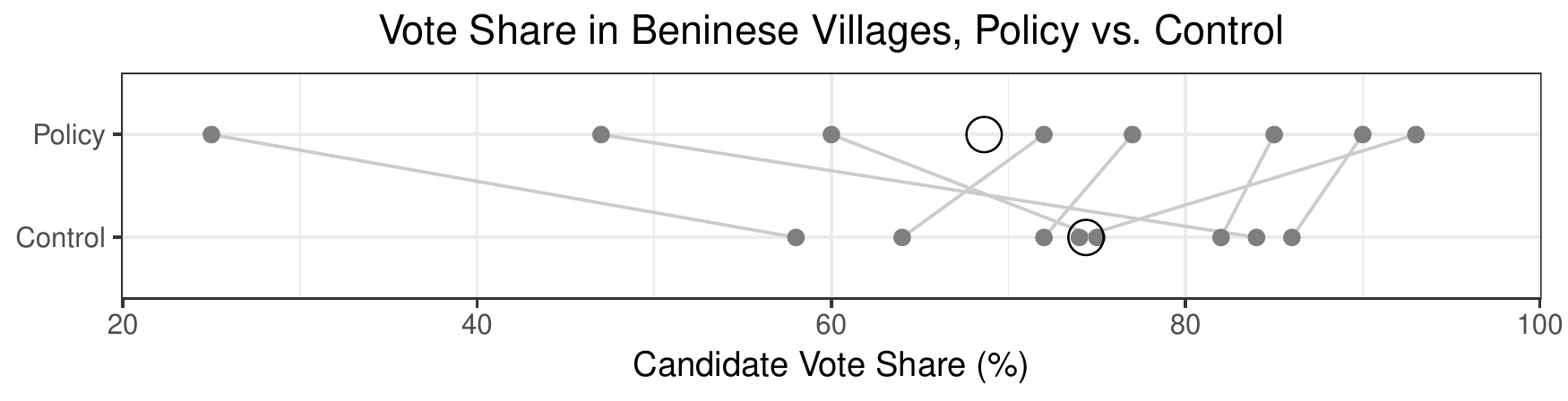}
  \caption{
    Data from the policy and control conditions of \citet{Wantchekon03a}. Hollow circles indicate group means. Villages in the same district are connected by gray lines.
}
  \label{fig:WantData}
\end{figure}

The entire 16-unit dataset is displayed in Figure \ref{fig:WantData}, which plots village-level vote share in each of the two treatment conditions (policy and control). As indicated by the hollow circles, average candidates' vote share in the policy condition was almost 6 percentage points lower than in the control condition, which is consistent with Wantchekon's main hypothesis that policy campaigns were generally less effective. This difference, however, is not statistically significant. A paired $t$ test yields 
a one-sided $p$-value of 0.23 and two-sided 90\% CI that ranges from $-19$ to $+8$ percentage points (see Figure \ref{fig:CI}, top line). Permutation tests using the difference of means or the Wilcoxon rank sum yield very similar two-sided CIs for a constant treatment effect (Figure \ref{fig:CI}, second line). Based on these results, it would appear that the sample size of this experiment is too small to support firm inferences about the magnitude of treatment effects.

We can say more, however, by focusing on extreme effects rather than typical (e.g., average) ones. If treatment effects are heterogeneous, 
then using a test statistic sensitive to the most negative or positive effects can provide greater power. 
Indeed, as  Figure \ref{fig:WantData} indicates, the policy condition contains both the two smallest \textit{and} the two largest observations in the sample. This pattern is consistent with the possibility that despite the negative difference of means, policy campaigns were \emph{more} effective than the control  in at least some villages. We can assess this hypothesis formally using the Stephenson rank test
. Testing the non-superiority null against the alternative of some positive effects, we obtain an exact significance level of $p^{\vee} = 22/256 \approx 0.086$, providing evidence that the policy campaign treatment was indeed more effective in at least some villages.\footnote{
  This is for the Stephenson ranks with subset size $m=6$, but $p$-values for $m=$ 7, 8, 9 and 10 are also around 0.09. Since villages were randomized pairwise within districts, the total number of possible treatment assignments is $2^8=256$.
} There is only slightly weaker  evidence that the policy treatment also had at least one negative effect: a Stephenson rank test of the non-inferiority null gives $p^{\wedge}=32/256=0.125$. By the intersection-union principle \citep{Berger82a}, we can simultaneously reject both the non-inferiority and non-superiority hypotheses at $p=\mathrm{max}(p^{\vee},\; p^{\wedge})=0.125$. There is thus suggestive evidence that treatment had at least one negative \textit{and} at least one positive effect.

\begin{figure}
  \centering
  \includegraphics[width=0.9\textwidth]{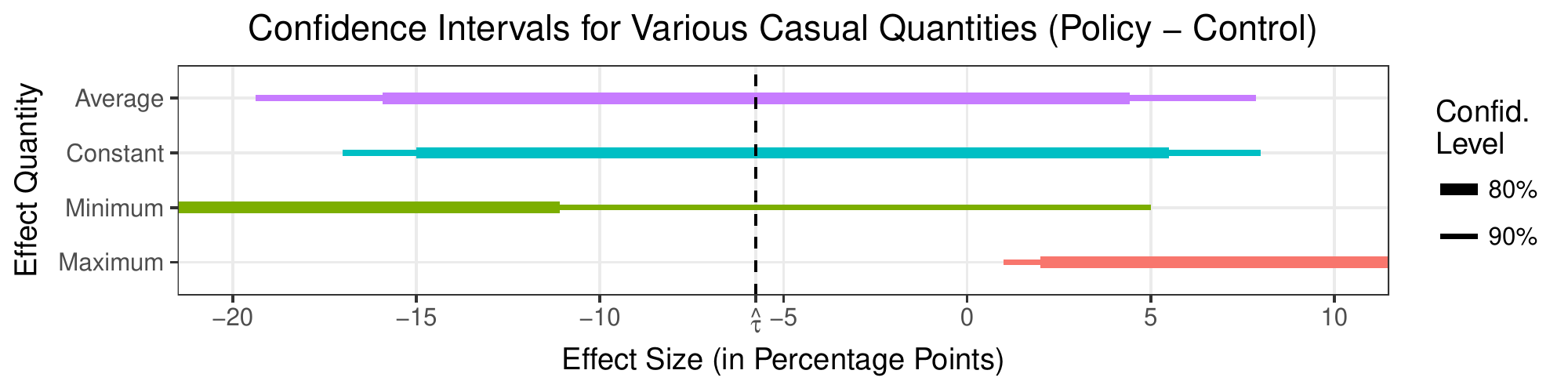}
  \caption{Confidence intervals for various causal quantities in \citet{Wantchekon03a}. The vertical dashed line labeled $\hat{\tau}$ indicates the observed difference of means between the policy and control conditions. The horizontal lines labeled ``Average'' indicate two-sided 80\% and 90\% CIs for the ATE, derived from a paired $t$ test. The lines labeled ``Constant'' are analogous CIs derived from the Wilcoxon rank sum test under the assumption that treatment effects are constant. The lines labeled ``Minumum'' and ``Maximum'' indicate the CIs for the minimum and maximum treatment effects, derived from Stephenson rank tests.}
  \label{fig:CI}
\end{figure}

By inverting the Stephenson rank tests, we can also make inferences about the magnitude of these effects. The lower bound of the one-sided 90\% CI for the maximum effect of the policy treatment is $+$1.0 points, and the lower bound of the 80\% interval is $+$2.0 (Figure \ref{fig:CI}, bottom line). The upper bound on the maximum effect, deduced from vote share's theoretical range, is $+93.0$.\footnote{
  If the outcome variable were unbounded, the outer bounds of the maximum and minimum CIs would be $\infty$ and $-\infty$, respectively. But since the range of the outcome is $[-100, 100]$, we can deduce the maximum possible treatment effect from the observed data.
} The analogous CIs for the minimum effect are, respectively, $[-86.0,\; +5.0]$ and $[-86.0,\; -11.1]$ (Figure \ref{fig:CI}, third line).\footnote{
  The large difference between the 90\% and 80\% bounds for the minimum effect is driven by the large gaps between observations at the low end of the policy group.
}  Note that the 80\% CIs for the minimum and maximum effects do not overlap with one another, which is consistent with the evidence that the policy treatment had both  negative and positive effects. Indeed, the gap between them ($-$11.1 to $+$2.0) suggests that the effects varied quite a bit in magnitude.
 In sum, even in a small sample where mean differences are uninformative, RI enables us to make inferences about treatment effects well beyond the simple conclusion that ``treatment had some effect.'' Moreover, these inference are exact rather than approximate, rely on no assumptions beyond SUTVA and random assignment, and without further assumptions could not be made with other methods.

\section{Conclusion}
\label{sec:conclusion}

The rise of nonparametric causal inference in the tradition of \citet{Neyman23a} and \citet{Rubin:1974wx} has been one of the most important recent developments in quantitative social science. This perspective, with its focus on average treatment effects and its acceptance of effect heterogeneity as the rule rather than the exception, has rightly prompted greater skepticism of statistical methods that heavily rely on parametric models or assumptions. It is then perhaps no surprise that permutation inference, which has traditionally been motivated in terms of shift hypotheses or other highly structured models of treatment effects \citep[e.g.,][]{Lehmann75a, Rosenbaum02a}, has also been regarded with some skepticism despite its freedom from many distributional assumptions.

We have shown that the view of permutation tests now dominant---that despite these tests' virtues, they are useful only for assessing the typically uninteresting and implausible sharp hypothesis that treatment had no effect at all---is too limited. We have proved that permutation tests using effect-increasing test statistics are valid under a more general bounded null, that this fact can be exploited to derive confidence intervals for the maximum effect, and that many familiar test statistics are effect-increasing. We have also highlighted the value of less familiar statistics such as the Stephenson rank sum, which is sensitive to the extremes of the treatment effect distribution, and explained the normative and theoretical relevance of bounded null hypotheses. Finally, we have re-analyzed a well-known experiment to demonstrate that, when coupled with the new interpretation we have advanced, permutation tests can yield substantively interesting inferences about treatment effects that are not possible based on ATE estimation alone.

In sum, we have developed a novel perspective on permutation tests that we hope tempers the skepticism that many social scientists hold towards this otherwise-appealing mode of statistical inference. Permutation tests are by no means a cure-all; nor are they a substitute for ATE estimation when that is the goal of the analysis. But in many cases, particularly when samples are small, treatment groups unequal, or treatment assignment complex, they are the most reliable form of statistical inference. Moreover, even when this is not the case, they often make possible inferences about treatment effects that other methods cannot. For these reasons, permutation tests deserve a secure place in the quantitative social scientists' toolbox.

\bibliographystyle{abbrvnat}
\bibliography{permutation}


\newpage
\appendix

\section{Supplementary Appendix}
\label{sec:online-appendix}

\subsection{The $t$ Test with Skewness and Unequal Sample Sizes}
\label{sec:performance-t-test}

The following code gives rejection rates of 17\% against the null with a nominal $\alpha$ of 5\%, demonstrating how the $t$ test can fail even in moderate-sized samples if there is a strong skew.

\begin{lstlisting}
  ### DGP with one small group and one large, and skew.
  n1 <- 30
  n2 <- 1000
  alpha <- .2
  beta <- 20
  ## Plot density
  plot(seq(0, 1, 0.01), dbeta(seq(0,1,0.01), alpha, beta), type="l")
  ## Simulations
  sims <- 10000
  ps <- rep(NA, sims)
  set.seed(1)
  for (i in 1:sims){
    ys <- rbeta((n1 + n2), 0.1, 5) ## beta
    zs <- sample(c(rep(1, n1), rep(0, n2)), replace=FALSE)
    ps[i] <- as.numeric(t.test(ys ~ zs, var.equal=FALSE)[3]) ## Welch/Neyman
  }
  ps <- ps[!is.na(ps)]
  length(ps)
  mean(ps < 0.01)
  mean(ps < 0.05)
\end{lstlisting}

\subsection{Proof of Validity under the Bounded Null}
\label{sec:proof-validity-under}

\begin{proof}
  Let $\sch_h \in \boundedH$ be any potential-outcomes schedule satisfying the
  non-superiority null hypothesis
  $\tau_i \leq \taumax_{i} \equiv \tau^0_{i}\ \forall i \in 1\ldots n$.
  Suppose $\sch_h$ holds. We randomize the units and obtain $\wobsV$,
  $\yobsV$, and $\tobs = T( \wobsV, \sch_h )$.  We then impute
  $\schimp_{\vtau^0} = \schimp( \wobsV, \yobsV, \sharpH )$ and obtain $\pimp$,
  our nominal $p$-value.  Even though $\nonsupH$ holds, $\sharpH$ might not,
  and so it is possible that $\schimp_{\vtau^0} \neq \sch_h$ and thus
  $\pimp \neq p$.  However, if the test statistic is effect-increasing, then
  we can place a bound on $\pimp$. In particular, note that
  \[
    \Yimp_i(1) =
    \begin{cases}
      Y_i(1) & \wobs_i = 1 \\
      Y_i(0) + \tau^0_i & \text{otherwise}
    \end{cases}
  \]
  for all $i \in 1\ldots n$. Under $\nonsupH$
  \[
    \tau^0_i \ge \tau_i = Y_i(1) - Y_i(0),
  \] so \[ Y_i(1) \leq Y_i(0) + \tau^0_i
  \]
  and thus
  \[
    Y_i(1) \leq \Yimp_i(1).
  \]
  In other words, every true potential outcome under treatment is no larger
  than its imputed equivalent.  By analogous logic $Y_i(0) \geq
  \Yimp_i(0)$. These inequalities imply $\sch_h \preceq
  \schimp_{\vtau^0}$. Since $T(W, \sch)$ is effect-increasing,
  $T(\vw, \sch_h) \leq T(\vw, \schimp_{\vtau^0})$ for any realization
  $W = \vw$. In other words, because the potential-outcome schedules are
  ordered $\sch_h \preceq \schimp_{\vtau^0}$, the values of $T$ simulated from
  $\schimp_{\vtau^0}$ will be pointwise weakly larger than $T$'s true
  reference distribution. As a consequence,
  \[
    \pimp = \pr{ T( \vw^*, \schimp_{\vtau^0} ) \geq \tobs } \geq \pr{ T(
      \vw^*, \sch_h ) \geq \tobs } = p,
  \]
  i.e., the estimated $p$-value will be at least as large as the true
  one. This gives a valid (though potentially conservative) test:
  \[
    \pr{ \pimp \leq \alpha } \leq \pr{ p \leq \alpha } \leq \alpha.
  \]
\end{proof}

\subsection{Proof of Validity of CI for Maximum Effect}
\label{sec:proof-validity-ci}

In this section, we prove that conventional confidence intervals for a constant treatment effect are valid nonparametric intervals for the maximum effect. To show this, we first need a small lemma:
\begin{lemma}
For our one-sided testing case, and regardless of the character of the true $\sch$, the one-sided CI for an effect-increasing statistic will be a half-interval $[L,\infty)$, indicating that the constant-shift treatment effect is no smaller than $L$.
\end{lemma}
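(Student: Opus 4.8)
The plan is to show that the acceptance set $\mathrm{CI} = \{\tau^0_h : p(\tau^0_h) \geq \alpha\}$ is \emph{upward-closed} in $\tau^0_h$; once that is established, the half-line shape $[L,\infty)$ follows immediately. The key structural fact is that $\tobs = T(\vw^{\text{obs}}, \vy^{\text{obs}})$ is a fixed number that does not depend on the hypothesized shift, so in $p(\tau^0_h) = \pr{ T(\vw^*, \schimp_{\tau^0_h}) \geq \tobs }$ all of the $\tau^0_h$-dependence lives in the imputed schedule $\schimp_{\tau^0_h}$, and hence in the permutation reference distribution $\{ T(\vw^*, \schimp_{\tau^0_h}) \}_{\vw^*}$. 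This is also why the conclusion holds ``regardless of the character of the true $\sch$'': the argument never touches $\sch$, only the imputations.

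First I would check that the imputed schedules are ordered monotonically in the shift, i.e., $\tau^0 \leq \tau^1$ implies $\schimp_{\tau^0} \preceq \schimp_{\tau^1}$ in the sense defined in Section~\ref{sec:perm-tests-non}. This is a direct inspection of the imputation formulas: for a unit with $\wobs_i = 1$ we have $\tilde{Y}_i(1) = y^{\text{obs}}_i$ (independent of the shift) and $\tilde{Y}_i(0) = y^{\text{obs}}_i - \tau^0$, non-increasing in $\tau^0$; for a unit with $\wobs_i = 0$ we have $\tilde{Y}_i(0) = y^{\text{obs}}_i$ and $\tilde{Y}_i(1) = y^{\text{obs}}_i + \tau^0$, non-decreasing in $\tau^0$. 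Thus raising the shift weakly raises every $\tilde{Y}_i(1)$ and weakly lowers every $\tilde{Y}_i(0)$, which is precisely $\schimp_{\tau^0} \preceq \schimp_{\tau^1}$.

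Next I would apply the effect-increasing property of $T$. From $\schimp_{\tau^0} \preceq \schimp_{\tau^1}$ we get $T(\vw, \schimp_{\tau^0}) \leq T(\vw, \schimp_{\tau^1})$ for every admissible assignment $\vw$, so for each realization $\vw^* = \vw$ the event $\{ T(\vw, \schimp_{\tau^0}) \geq \tobs \}$ is contained in $\{ T(\vw, \schimp_{\tau^1}) \geq \tobs \}$, and therefore
\[
  p(\tau^0) = \pr{ T(\vw^*, \schimp_{\tau^0}) \geq \tobs } \;\leq\; \pr{ T(\vw^*, \schimp_{\tau^1}) \geq \tobs } = p(\tau^1).
\]
Hence $\tau^0 \mapsto p(\tau^0)$ is non-decreasing, so $\{ \tau^0_h : p(\tau^0_h) \geq \alpha \}$ is upward-closed and thus an interval unbounded above; setting $L = \inf\{ \tau^0_h : p(\tau^0_h) \geq \alpha \}$ (with $L = -\infty$ if every shift is accepted) yields the asserted form $[L,\infty)$. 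This is the same EI-plus-$\preceq$ machinery as in Appendix~\ref{sec:proof-validity-under}, with the imputed schedule now playing the role there played by the true schedule $\sch_h$. I would also record the elementary facts that $\vw^* = \vw^{\text{obs}}$ always contributes a permuted value equal to $\tobs$ and that, for the usual statistics, $p(\tau^0) \to 1$ as $\tau^0 \to \infty$, so the set is genuinely nonempty and genuinely unbounded above rather than empty or all of $\SR$.

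The only point that needs care — and I expect it to be the sole, minor obstacle — is whether $L$ itself lies in $\mathrm{CI}$, i.e., left-continuity of $p(\cdot)$ at $L$. For the difference of means (and any statistic affine in $\tau^0$ for fixed $\vw^*$) $p$ is a left-continuous step function and $\mathrm{CI} = [L,\infty)$ on the nose; for rank statistics, ties among imputed outcomes can occur at isolated values of $\tau^0$, and one either adopts the closed convention at such boundary points or simply observes that the half-line shape claimed by the lemma is unaffected either way. I would dispatch this with a one-sentence remark rather than a calculation.
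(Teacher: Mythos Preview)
Your proposal is correct and follows essentially the same route as the paper: both arguments hinge on the observation that $\tau^0 \le \tau^1$ implies $\schimp_{\tau^0} \preceq \schimp_{\tau^1}$, whence the effect-increasing property yields $p(\tau^0) \le p(\tau^1)$, so the acceptance set is upward-closed. The paper packages this as a short proof by contradiction invoking the main result, whereas you give the monotonicity of $p(\cdot)$ directly and add a remark on whether the endpoint $L$ is attained; the latter is a nice bit of extra care that the paper omits.
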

\begin{proof}
Say our CI is not a half-interval. Then there exists $\tau_1 < \tau_2$ such that $\tau_1$ is not in CI and $\tau_2$ is. But the proof of our main result shows that if we are testing $\rH_{\tau^0_2}$ then the $p$-value will be lower for any $\sch' \preceq \schimp$, including the one corresponding to a constant treatment effect shift of $\tau_1 < \tau_2$.
But this means $\tau_1$ would not be in CI, which is a contradiction.
Therefore the CI is a half-interval.
\end{proof}

These confidence intervals can easily generalize to bounded nulls. 
Let CI be the above confidence set generated by inverting a sequence of constant-effect nulls. Now consider the true potential-outcomes schedule $\sch$.
Assuming all potential outcomes are defined,  let
\[ \tau^* \equiv \arg\max_i \tau_i \]
be the largest treatment effect in schedule $\sch$.
Then $\nonsupH: \tau_i \le \tau^{*}\ \forall i$ is true, and so testing the associated $\sharpH: \tau_i = \tau^{*}\ \forall i$ will reject with probability no greater than $\alpha$. We therefore will include $\tau^* \in \mathrm{CI}$ with probability no less than $1-\alpha$, giving a valid CI for the maximum effect.

\subsection{Effect-Increasing Statistics}
\label{sec:proof-that-rank}

Many statistics are effect-increasing. In this appendix we first prove that differences in rescaled outcomes are effect-increasing.
We then show that a general class of rank-based statistics is EI in the absence of ties, and then extend this results to the case of ties for, the rank-sum and Stephenson rank-sum statistics, two primary statistics discussed in the paper.

\begin{proof}[Lemma 2]
Let $Q_i = Q(Y_i)$ denote the observed scores, and let $Q_i(1) = Q(Y_i(1))$ and $Q_i(0) = Q(Y_i(0))$ indicate the scores of the potential outcomes. 
Denote as $S(W, \sch)$ any statistic with the form $a(W) \sum_i W_i Q(1)_i = a(W) \sum_i W_i Q_i$ with $a(W) \geq 0$, i.e., the scaled sum of the scores of the treated observations.\footnote{
  \label{fn:sumstat}
  Note that as $Q(Y_i)$ depends only on unit $i$, not on the entire vector $Y$. This is a crucial difference from the class of sum statistics defined by \citet[35]{Rosenbaum02a}. As a result of this distinction, the class of statistics $S(W, \sch)$ defined here excludes rank statistics because the rank of $Y_i$ depends on the values of other observations. We treat ranks statistics separately below.
} 
We first show that for any $S(W, \sch)$ and any pair of potential-outcome schedules $\sch \preceq \sch'$, $S(W, \sch) \le S(W, \sch')$:
\begin{align*}
  Y_i(1) \le Y'_i(1)\ \forall i & \implies Q_i(1) \le Q'_i(1) \ \forall i &
              \text{b/c\ } Q(Y_i) \text{\ is non-decreasing in}\ Y_i\\
  & \implies  Q_i(1) - Q'_i(1) \le 0 \ \forall i\\
  & \implies  \sum W_i [Q_i(1) - Q'_i(1)] \le 0 
             & \text{b/c every element is} \le 0 \\
  & \implies  \sum [ W_iQ_i(1)] -  \sum [W_iQ'_i(1)] \le 0 \\
  & \implies  a(W) \sum [ W_iQ_i(1)] \le a(W) \sum [W_iQ'_i(1)] \\
  & \implies  S(W, \sch) \le S(W, \sch') . 
\end{align*}
The above includes the special case of the sum of the treated responses, $\sum_i W_i Y(1)_i$, for which $Q(Y_i) = Y_i$ and $a(W)=1$. 
The intuition behind the above is that raising any individual potential outcome on the treatment side will either (if the unit was treated) increase $\sum_i W_i Y(1)_i$ or (if the unit was not treated) not affect the test statistic at all.

This extends to differences the scaled sum of scores as in the lemma because the scaled negative of the sum of the control scores, $-b(W) \sum [ W_iQ_i(0)]$, is effect increasing, and the sum of two effect increasing statistics is also effect increasing.

In particular, by letting $Q(Y_i) = Y_i$, $a(W)=1/n_T$, $b(W)=1/n_C$ we have our difference in means as being effect increasing.
This also applies to assignment mechanisms with variable numbers of treated units (once we define $a(W)$ and $b(W)$ for those $W$ that have no treated or no control units).


\end{proof}

\begin{proof}[Lemma 3]
Let the $R_i$ be the ranks of the (adjusted) observed outcomes, $Q(\cdot)$ a monotonic mapping of these ranks to the real numbers, and $a(W), b(W)$ nonnegative functions of the assignment vector.

Consider two schedules $\sch' \preceq \sch$ that are identical except that for some specific unit $k$ with $Y_k'(1) \leq Y_k(1)$.
Conceptually consider making $\sch'$ by reducing potential outcome $Y_k(1)$ for some specific $k$, and leaving the other potential outcomes alone.

Now given any assignment vector $W$, we have $Y^{obs}_i, i = 1, \ldots, n$, and $R_i, i = 1, \ldots, n$ the associated ranks.
Assume no ties in ranks.
By lowering $Y_k(1)$ to $Y_k'(1)$ we potentially could change some ranks.  
In particular, the rank of unit $k$ could go down and if it does the ranks of some other units would increase.
Let $G = \left\{ j : R_j' > R_j \mbox{ and } W_j = 1 \right\}$ be the set of units in the treatment group with increased ranks.
Let $m$ denote the size of this set.
Let $j_1, \ldots, j_m$ be the indices of the units in $G$ arranged in increasing order by rank, so $R_{j_a} < R_{j_{a+1}}$.
Importantly, for any unit in $G$, we have a change of at most 1 rank, so $R_{j_{a+1}} \geq R_{j_{a}} + 1 \geq R'_{j_{a}}$, giving $Q( R'_{j_{a}} ) \leq Q( R_{j_{a+1}} )$.
Because no unit can increase its rank to above $R_k$ by reducing unit $k$ we have $R'_{j_m} \leq R_k$.
Similarly, the reduced $R_k'$ must be less than the rank of any unit in $G$, giving $R'_k \leq R_{j_1}$.
This gives:
\[ Q( R'_{k} ) + Q( R'_{j_1} ) + \ldots + Q( R'_{j_{m-1}} ) + Q( R'_{j_m} ) \leq Q( R_{j_1} ) + Q( R_{j_2} ) + \ldots + Q( R_{j_m} ) + Q( R_k ) \]
due to a pairwise comparison (the first elements of the two sums are ordered, the second, etc., up to the $m$th).

This means that $T' \leq T$ because the treatment sum decreases with the change, and as the control average can only go up (those units impacted in the control group all have ranks that are larger than they were previously), subtracting this second term also decreases the total.
As with the prior proof, the scaling quantities $a(W)$ and $b(W)$ are unchanged, and therefore carry through.

A similar argument shows that $T' \leq T$ if we consider a pair of potential-outcome schedules where only a single control potential outcome is increased from $\sch$ to $\sch'$.

Finally, take any two potential-outcome schedules $\sch' \preceq \sch$.  Generate a chain of potential-outcome schedules from $\sch'$ to $\sch$ by changing one potential outcome at a time.  
For example, the first step in the chain would be to modify $\sch'$ to $\sch''$ so $Y_1(1)'' = Y_1(1)$ and all other $Y_i''(z) = Y_i'(z)$.
By transitivity along this sequence we finally have $T' \leq T$ for any $W$.
Therefore, $T$ is potential outcomes monotonic.
\end{proof}

\paragraph{Examples.}
If $Q(r) = r$ we have the classic rank sum test.
Similarly, if 
\[ Q(r) = {r - 1 \choose s - 1} \mbox{ for } r \geq s \mbox{ and } Q(r) = 0 \mbox{ otherwise } \]
for some fixed $s$ (representing how many subsets of size $s$ can be formed where our unit with rank $r$ is biggest), we have the Stephenson Rank test.

\paragraph{Ties.} Unfortunately, this general proof does not go through if there are ties.  
This is because by lowering a potential outcome, an entire group of mid-ranks can shift.
Consider the case where the original ranks of treated units are $2, 2, 2, 8$ and we lower the rank-$8$ so much that it becomes rank 1.
The other units then will have ranks $3, 3, 3$ giving final ranks of $1, 3, 3, 3$.  
Now, if $Q(r) = 0$ for $r < 3$ and 1 otherwise, the sum of the four goes from 1 to 3.  
The control units are unaffected.
This violates the monotonicity.
Many specific rank based statistics are, however, EI even in the presence of ties.
This can be shown by direct proof. 
We next do this for the rank-sum and the Stephenson rank-sum statistics.

\paragraph{The rank-sum test.}
Let 
\[ T( W, \sch, H_{s\delta} ) = \sum W_i \text{rank}( \Yobs_i - \delta \Wobs_i ) \]
This statistic is equivalent to the Mann-Whitney statistic summing all pairs of treatment-control observations with the treatment beating the control
\[ T_{MW}( W, \sch ) = \sum_{i, j} W_i (1-W_j) \ind{ \Yobs_i - \delta \geq \Yobs_j } = \sum_{i, j} W_i (1-W_j) \ind{ Y_i(1) - \delta  \geq Y_i(0) } \]
with $\ind{a \leq b}$ equalling 1/2 if $a=b$.

Then, for any two potential-outcome schedules with $\sch \preceq \sch'$ we have
\[ \ind{ Y_i(1) - \delta  \geq Y_i(0) } \leq \ind{ Y'_i(1) - \delta \geq Y'_i(0) } \]
since we are moving the left side up and the right side down, which only increases the chance of having the left side be higher.
Plugging this in to our sum of pairwise comparisons easily obtains our result of $T_{MW}( W, \sch ) \leq T_{MW}( W, \sch' )$.

\paragraph{The Stephenson rank test.}
Represent this statistic as a sum of indicators across all subsets where the indicator is 1 if a treatment unit is (tied for) the largest.  
We have, letting $G$ indicate a size-$s$ subset of unit indices and $\mathcal{G}$ the collection of all such $G$,
\[ T_{S}( W, \sch ) = \sum_{G \in \mathcal{G} } H_G \]
where
\[ H_G = \max_{i \in G} W_i \mbox{ s.t. } \tilde{Y}_i \geq \max_{j \in G} \tilde{Y}_j \]
with $\tilde{Y}_i$ being an adjusted outcome (i.e., imputed control outcome under the null).
The above simply says that $H_G$ is 1 if there is a treated unit that is (tied for) largest value in the set $G$.
Alternatively, substitute $Y_i^{obs}$ for $\tilde{Y}_i$.

Then, for any two potential-outcome schedules with $\sch \preceq \sch'$ and a given $G$ we have $H_G$ and $H_G'$ with
\[ H_G \leq H_G' \]
since for any unit under treatment, $Y_i^{obs}$ can only be larger, and for control, smaller.  
Therefore, for each subset where a treatment unit was largest for $\sch$, we will still see one being largest for $\sch'$.
These inequalities sum, giving $T_S \leq T_S'$ for any $W$, which implies monotonicity.

\end{document}